\documentclass[12pt, draftclsnofoot, onecolumn]{IEEEtran}
\usepackage{amsmath}
\usepackage[pdftex]{graphicx}
\usepackage{amsthm}
\usepackage{amsfonts}
\usepackage{amssymb}
\usepackage{epstopdf}
\usepackage{cite}
\usepackage{subfigure}
\usepackage{color}

\newtheorem{MyLem}{Lemma}
\newtheorem{MyTheo}{Theorem}
\newtheorem{MyProp}{Proposition}
\newtheorem{MyCoro}{Corollary}

\begin{document}
\title{\huge{On the Performance of Data Compression in Clustered Fog Radio Access Networks}}
\author{\IEEEauthorblockN{
Haonan Hu, \textit{Member IEEE},
Yan Jiang,
Jiliang Zhang, \textit{Senior Member IEEE},
Yanan Zheng,
Qianbin Chen \textit{Senior Member IEEE},
and Jie Zhang \textit{Senior Member IEEE}}\\
\thanks{This work was supported in part by the National Natural
Science Foundation of China (NSFC) under the grant number
61901075, 61831002, in part by the Natural Science Foundation of Chongqing, China, under the grant number cstc2019jcyj-msxmX0602, in part by the National Key R$\&$D Program of China under the grant number 2017YFE0118900, and in part by the EC H2020 project is3DMIMO under the grant number 734798. 

H. Hu, Y. Zheng and Q. Chen are with the School of Communication and Information Engineering, Chongqing University of Posts and Telecommunications, Chongqing 400065, China.

Y. Jiang, J.Zhang, and J. Zhang are with the Department of Electronic and Electrical Engineering, University of Sheffield, Sheffield
S1 3JD, U.K.(e-mail: yjiang71@sheffield.ac.uk)}
}
\maketitle

\begin{abstract}
The fog-radio-access-network (F-RAN) has been proposed to address the strict latency requirements, which offloads computation tasks generated in user equipments (UEs) to the edge to reduce the processing latency. However, it incorporates the task transmission latency, which may become the bottleneck of latency requirements. Data compression (DC) has been considered as one of the promising techniques to reduce the transmission latency. By compressing the computation tasks before transmitting, the transmission delay is reduced due to the shrink transmitted data size, and the original computing task can be retrieved by employing data decompressing (DD) at the edge nodes or the centre cloud. Nevertheless, the DC and DD incorporate extra processing latency, and the latency performance has not been investigated in the large-scale DC-enabled F-RAN. Therefore, in this work, the successful data compression probability (SDCP), i.e., the probability of the task execution latency being smaller than a target latency and the signal to interference ratio of the wireless uplink transmission being larger than a threshold, is defined to analyse the latency performance of the F-RAN. Moreover, to analyse the effect of compression offloading ratio (COR), which determines the proportion of tasks being compressed at the edge, on the SDCP of the F-RAN, a novel hybrid compression mode is proposed based on the queueing theory. Based on this, the closed-form result of SDCP in the large-scale DC-enabled F-RAN is derived by combining the Matern cluster process and M/G/1 queueing model, and validated by Monte Carlo simulations. Based on the derived SDCP results, the effects of COR on the SDCP is analysed numerically. The results show that the SDCP with the optimal COR can be enhanced with a maximum value of $0.3$ and $0.55$ as compared with the cases of compressing all computing tasks at the edge and at the UE, respectively. Moreover, for the system requiring the minimal latency, the proposed hybrid compression mode can alleviate the requirement on the backhaul capacity. 
\end{abstract}
\begin{IEEEkeywords}
Fog radio access networks, data compression, latency analysis, Matern cluster process, M/G/1
\end{IEEEkeywords}

\section{Introduction}
With the unprecedented proliferation of new service types, the ultra reliable low-latency communications (URLLC) has been considered as a key requirement for the fifth generation (5G) networks to support latency-sensitive services \cite{tullberg2016the}, e.g., the virtual reality (VR) and the augmented reality (AR). This latency requirement has been further improved in the future sixth generation (6G) networks, which claims that the experienced end-to-end latency should be in the range of $0.01$-$0.1$ ms \cite{zhang20196g}. To meet the latency requirement, the fog-radio-access network (F-RAN) has been advocated \cite{peng2015contract} as a new network architecture, which utilizes the processing capacities in the UEs and edge nodes. 

The F-RAN is composed of three layers: the terminal layer, the access layer and the cloud computing layer \cite{peng2015contract}. The terminal layer consists of the user equipments (UEs), and the access layer consists of fog nodes (FNs) and fog access points (FAPs). These are all equipped with limited computing capabilities, and the access layer usually has higher computing capabilities as compared with the terminal layer. The cloud computing layer is composed of the cloud computing centres that equipped with powerful computing capable base-band-unit (BBU) pools. The latency of computation tasks can be significantly reduced for two major reasons. One is that the latency-sensitive tasks can be executed at the network edge, and the other is that by deploying fibre-optical fronthaul links between the FAP and the center cloud, the unstable transmission delay in the core network can be avoided. However, to support the flexibility of F-RAN deployment, the wireless backhaul may be adopted to link the FN to its associated FAP \cite{liu2018cooperative}. As a result, the capacity of wireless backhaul is constrained, which may become the bottleneck for the latency requirement in the F-RAN. 

To address this, the data compression (DC) for the computation tasks has been advocated in the F-RAN. By compressing the data of computation tasks before transmitting on the backhaul, the data size shrinks thus the transmission delay can be reduced. After the compressed computation task receiving in cloud, the original computation task can be retrieved by data decompression (DD). The DC schemes are usually classified into two categories: lossy and lossless DC algorithms. On one hand, the lossy DC algorithms are mainly used for the digitally sampled analog data, which mostly includes sound, video, graphics, or picture files. This is because the original data cannot be recovered exactly the same as it was compressed by lossy algorithms before, and a relative small information loss is acceptable for such data types. On the other hand, the lossless DC algorithms are mainly used for textual data, which can fully recover the original data by decoding the encoded repeated pattern in it. Nevertheless, employing the DC and DD in the F-RAN will introduce additional computing latency. Thus, the impact of DC and DD on the latency performance of F-RAN should be investigated. 


There have been some existing works studying the DC. In early works \cite{tavli2010optimal,deepu2017a,le2018lossless}, the DC was mainly discussed in the wireless sensor networks (WSNs), which aimed to design new DC techniques or schemes to save the energy consumption of sensors while ensuring lossless transmission in the network. However, the delay of DC for real-time applications was ignored. To reduce the additional latency introduced by the DC for the WSNs, a novel DC schemes utilizing Huffman trees was
proposed in \cite{gia2019lossless}, which is applicable for a wide range of real-time systems. Recently, the DC technique have been investigated in the network with mobile edge computing (MEC). In \cite{ly2019joint}, the energy consumption in the MEC-enabled network was optimized by jointly adjusting the task allocation decision and the compression ratio. The results demonstrated that the proposed scheme outperforms that without DC before transmission. Motivated by the computation offloading \cite{guo2018computation,du2018computation,liu2018multiobjective}, whether and where to compress the data in the MEC-enabled networks was investigated in \cite{ren2019data}. Specifically, the end-to-end latency was derived with DC respectively operating on the UE and the edge node, and the optimal compression ratio was analysed to minimize the end-to-end latency. The results showed that this optimal value has a binary structure and is mainly affected by the UE computation capacity and the bandwidth of wireless transmission. In \cite{xu2019energy}, the energy consumption of the MEC-enabled network was minimized by jointly optimizing the computation offloading, DC and transmission duration allocation with latency constraint. The results showed that the proposed algorithm outperforms the algorithms with all data operating DC and without DC in terms of the energy consumption. Based on these, work \cite{Nguyen2020joint} proposed a joint optimization algorithm to minimize a weighted sum of the energy consumption and end-to-end delay of all users by computation offloading, radio and computing resources allocation, and compression ratio adjustment. The results showed that the proposed optimal algorithm for DC at UEs and FNs can significantly reduce the weighted sum of energy consumption and service delay as compared with that do not leverage DC. However, the queueing delays in the DC, DD and task computation were ignored in all the above mentioned works. In \cite{liu2018cooperative}, the service outage probability was analysed in the F-RAN with DC. The service outage probability is defined as either the signal to interference plus ratio (SINR) of the wireless link being smaller than a threshold or the computation latency being larger than a threshold. Combined with the queueing theory, the latency of DC at the FN and that of DD and task computing at the FAP were modelled as two independent M/M/1 queue. The results showed that the F-RAN with DC outperforms that without DC in terms of the service outage probability. To our best knowledge, \cite{liu2018cooperative} is the only work utilizing the queueing theory to model the DC, DD and task computation, but the M/M/1 queue is not suitable for the latency modelling of DD and task computing at the FAP. This is due to that the DD and task computing include two independent exponential-distributed service time, while the M/M/1 queue assumes that the service time follows the exponential distribution. Moreover, the effect of COR on the latency performance is ignored in this work.
Additionally, all the above mentioned works considered a small-scale F-RAN (MEC-enabled network), which consists of limited number of FNs and FAPs (edge nodes and edge clouds). Consequently, the conclusions or design insights obtained in these works may not be valid for a large-scale F-RAN for the latency analysis.  

Stochastic geometry has been proven to be powerful mathematical tools to model and analyse the performance of large-scale networks \cite{andrews2011tractable,hu2018coverage,liu2020the,hu2020on} and there are only a few works using stochastic geometry analysing the performance of large-scale randomly deployed 
MEC-enabled cellular networks or F-RAN. In \cite{kong2018fog}, the downlink coverage probability in a device-to-device enabled F-RAN was analysed, where the positions of FAPs were modelled following the Ginibre point process (GPP). The results indicated that the repulsion of FAPs introduced by GPP led to a higher coverage probability as compared with that obtained without repulsion following the Poisson point process (PPP) \cite{hu1016coverage}. However, the results in \cite{kong2018fog} cannot be applied in the latency performance of DC-enabled F-RAN as this latency performance is mainly affected by the uplink transmission performance. The latency performance in the large-scale F-RAN was mainly investigated for the computation offloading, which aims to reduce the end-to-end latency by offloading intensive computation tasks from UEs to edge nodes. In \cite{ko2018wireless}, the end-to-end latency was investigated in a large-scale MEC-enabled cellular networks generated following PPP under synchronous and asynchronous offloading cases. In these two cases, the tasks respectively arrives randomly over time and arrives periodically at the beginning of each time slot. The results showed that the computing latency in the synchronous case is longer than that in the asynchronous case, which leads to a longer end-to-end executing latency. In \cite{park2018successful}, the successful edge computing probability, which was defined as the probability of total latency being less than the target latency, was analysed in the MEC-enabled heterogeneous networks. The results indicated that the successful edge computing probability can be increased by offloading more tasks to the high-computing-capability MEC sever. Nevertheless, this work ignored that the UEs have clustered nature around their associated FNs in the F-RAN. Accordingly, \cite{xu2019performance} modelled the positions of FNs and UEs following the Thomas cluster process. Based on this, the average delay of the large-scale F-RAN with computation offloading was analysed by using the M/M/1 queue to model the task execution delay. An optimal ratio of computing tasks processed by cloud centre was given and the results showed that this optimal ratio can reduce the average delay significantly as compared with offloading the computing tasks only on the FN or the cloud centre. However, all these works did not take the DC into consideration, which has significant impact on the processing delay modelling and the transmission delay, especially in the F-RAN with constrained backhaul. We summarize the motivations of our work as the following three points:

1) The existing latency models of DD and task computing either ignore the queueing delay or simply use the M/M/1 queue, which is impractical in a actual. Therefore, a more practical model for DD and task computing is required for latency analysis.

2) The existing works that investigated the network latency performance of a large-scale spatially distributed F-RAN, which consists of large numbers of randomly deployed FNs, mainly considered the computation offloading without DC. The latency performance of the DC-enabled F-RAN, where DC can be fully or partially operated on the UE or the FNs, has been ignored so far, which can provide system design insights on the actual DC-enabled F-RAN deployment. 

3) In the literature, the effect of backhaul capacity on the latency performance in the DC-enabled F-RAN has not been investigated yet.

To our best knowledge, we are the first to propose a latency model for the DD and task computation based on the M/G/1 queueing model. Moreover, we propose a novel hybrid compression mode to analyse the effect of compression offloading ratio (COR), which determines the proportion of tasks being compressed at the edge, on the latency performance of the F-RAN.    
Equipped with this, we investigate the successful data compression probability (SDCP) in the large-scale clustered F-RAN under DC fully or partially operating on the UE or the FN. The SDCP is defined as the probability of the execution latency being less than a target latency and the signal to interference ratio (SIR) of the wireless uplink transmission being larger than a threshold. Specifically, the explicit expression of SDCP is derived in closed from and validated by Monte Carlo simulations. Based on this, the effect of the COR and backhaul capacity on the SDCP is analysed numerically. Furthermore, an optimal COR to maximize the SDCP is obtained based on the derived SDCP. Our main contributions of this work can be summarised as follows: 

1) We are the first to propose a latency model for the data decompression (DD) and task computing based on the M/G/1 queuing model. Specifically, the distribution of the latency spent on the DD and task computing is obtained, which can provide guidelines for other readers who are investigating the latency performance in networks adopting the DC. 

2) To our best knowledge, we are the first to obtain the closed-form results of the SDCP in the large-scale random deployed DC-enabled F-RAN, where the computation tasks can be partially compressed at the UEs or the edge nodes. The derived closed-form results are validated by the Monte Carlo simulations. Based on this, the optimal ratio of computation tasks being compressed at the edge nodes is analysed numerically. 

3) The results show that by appropriately allocating computation tasks being compressed at UEs and edge nodes, the SDCP can be improved significantly. Under our simulation environment, the SDCP with the optimal COR can be enhanced with a maximum of $0.3$ and $0.55$ as compared with the cases of compressing all computing tasks at the edge and at the UE, respectively. The optimal COR ranges from $0.55$ to $1$. In addition, the optimal COR decreases with the increasing of backhaul capacity, and the proposed hybrid compression mode can alleviate the requirement on the backhaul capacity.

The rest of this paper is organised as follows: Section \ref{sec_sys_mod} introduces the network model and the DC and task computing model. In Section \ref{sec_perf_analysis}, the analytical results of the SDCP in the F-RAN are given. Section \ref{sec_simu_res} presents the validation results and the analysis results before concluding the paper in Section \ref{sec_conclusion}.

\section{System Model}
\label{sec_sys_mod}
\subsection{Network Model}
\begin{figure}
	\centering
	\includegraphics[width = 0.56\linewidth]{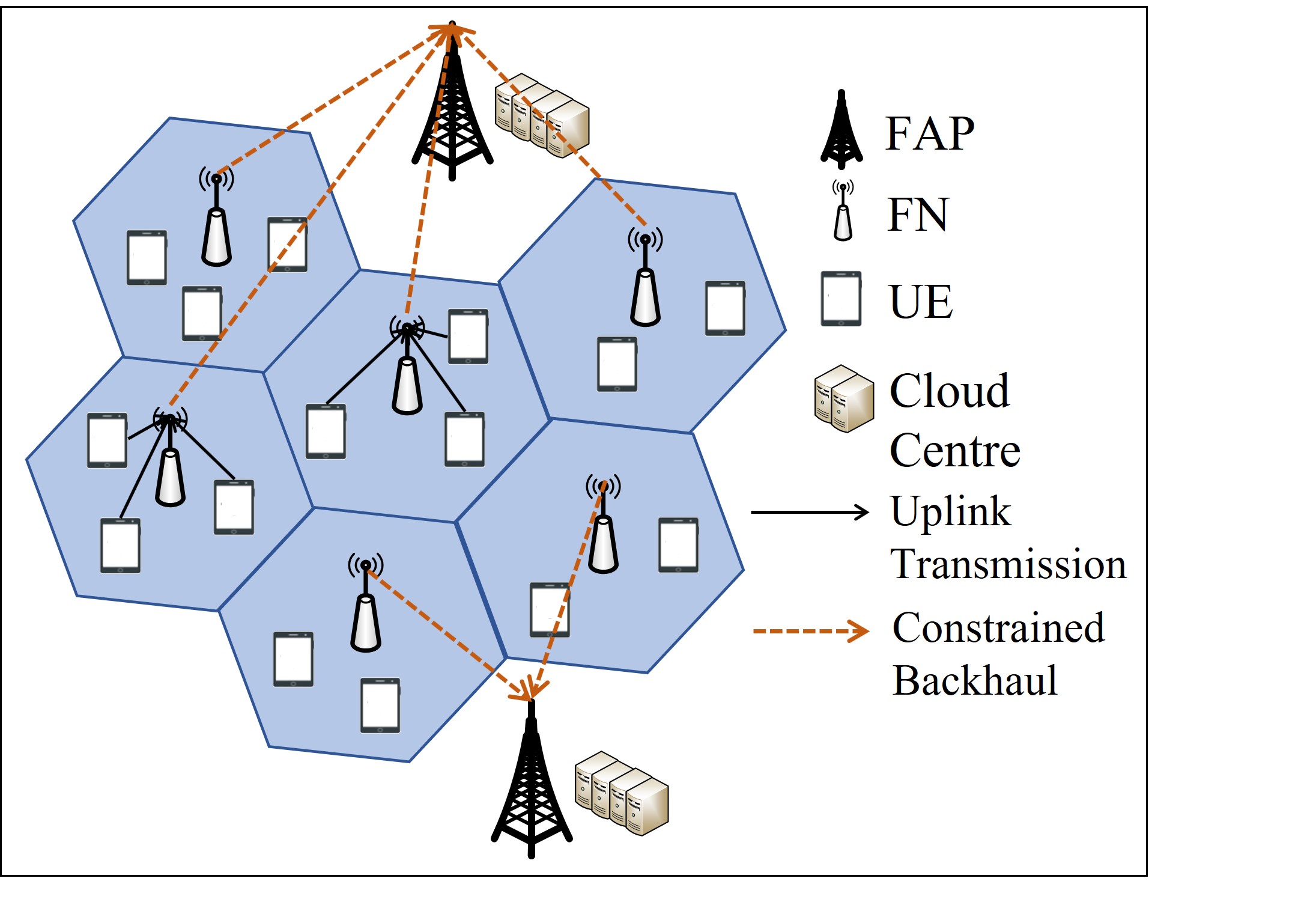}
	\caption{The illustration of system model}	
	\label{fig_sys_model}
\end{figure}
We consider a large-scale uplink F-RAN consisting of FAPs, FNs and UEs. As illustrated in Fig. \ref{fig_sys_model}, in the F-RAN, each UE is connected to its associated FN via wireless link, each FN is connected to its corresponding FAP via a constrained backhaul due to the flexibility of FN deployment \cite{liu2018cooperative}, and each FAP is connected to the cloud via an optical-fibre link. To reflect the cluster nature of UEs, we adopt the Matern cluster process (MCP) \cite{peng2019analysis} to model the positions of FNs and their corresponding UEs, where the positions of FNs are deployed following the Poisson point process (PPP) with density $\lambda_{\rm N}$, denoted by $\Phi^{\rm N}=\{X^{\rm N}_j\}$, $j\in M_{\rm N}$ with $M_{\rm N}$ being the number of FNs. The corresponding UEs are scattering in a circle region around the FN following uniform distribution with radius being $1/\sqrt{c}$ \cite{kobayashi2014uplink}, where  $c=\pi \lambda_{\rm N}$, and note that the group of UEs connecting to the same FN is assumed to use orthogonal frequency resources, thus there only exits an interfering UE in each of other FNs for a specific UE. The locations of UEs are denoted by  $Y_{i,j}$, $i\in M_j^{\rm U}$, where  $M_j^{\rm U}$ is the number of UEs in FN $j$. The positions of FAPs $\Phi^{\rm A}=\{X^{\rm A}_n\}$ are modelled following another PPP with density $\lambda_{\rm A}$, where $n\in M_{\rm A}$ with $M_{\rm A}$ being the number of FAPs, and FNs are associated with their nearest FAPs, which provide edge computing capability. To simplify the analysis, we assume that the capacity of constrained backhaul connecting the FN and the FAP is constant, denoted by ${C^{\rm bh}_{j,n}}$. 

For each link between a UE and an FN, it experiences pathloss and small-scale fading. The pathloss follows a log-distance model, which can be denoted by $l(r)=r^{-\alpha}$, where $r$ is the Euclidean distance between the transmitting and receiving nodes , and $\alpha$ is the pathloss exponent. The Rayleigh fading is assumed for the small-scale fading, thus the caused received power attenuation is modelled as an independent exponential distribution with rate parameter being $1$. We denote the small-scale fading of the link between the UE $Y_{i,j}$ and the FN $k$ as $h_{j,k}$. For a specific UE $Y_{i,j}$ associated with the FN $X^{\rm N}_j$, the uplink transmit power employs a fractional channel inversion power control mechanism given as $P_{i,j}=pl(R_{i,j})^{-\epsilon}$ \cite{kong2017modeling}, where $\epsilon$ is the power control factor, $p$ is the highest UE transmit power, and $R_{i,j}$ is the distance between the UE $Y_{i,j}$ and the FN $X^{\rm N}_j$.

Without loss of generality, we can place the target FN at the origin according to the Slyvnyak's theorem \cite{andrews2011tractable}.  Recall that there only exists an interfering UE in each of other FNs for a specific UE associated with the target FN. For denotational simplicity, the small-scale fading of the link from the associated specific UE and the interfering UE associated with FN $X^{\rm N}_j$ are denoted by $h_0$ and $h_k$, respectively. The Euclidean distance between the interfering UE and its associated FN $X^{\rm N}_k$ as $R_k$. As a result, the uplink signal to interference ratio (SIR) of the UE in the target FN can be represented as follows:
\begin{equation}
\label{eq_sir_def}
{\rm{SIR}}=\frac{ph_0l(|Y_0|)^{1-\epsilon}}{{\sum_{k \in \Phi^{\rm N}\backslash\{{0}\}}}
	ph_k\left[l(R_k)\right]^{-\epsilon}l(|Y_k|)},
\end{equation}
where the location of the specific UE associated with the target FN is denoted by $Y_0$, the location of the interfering UE associated with FN $X^{\rm N}_k$ is denoted by $Y_k$.  
Equipped with this SIR, the uplink transmission rate can be achieved by $B\log(1+\rm SIR)$, where $B$ is the subchannel bandwidth.

\subsection{Data Compression and Task Computing Model}

To alleviate the burden on the backhaul, the DC technique is applied in the F-RAN. The entire procedure of the DC can be described as follows. Firstly, the computation tasks are generated in each UE. For a specific UE $Y_{i,j}$ associated with the FN $X_j^{\rm N}$, the task generation rate is assumed to follow a Poisson process with generation rate $\psi_{i,j}$. Secondly, based on the processing capabilities of the UE and the associated FN, the UE or the FN determines whether to fully or partially compress the computation tasks. Note that it is unnecessary to employ DC at FAPs as the FAPs are connected to the cloud centre via optical-fibre links, which can support high data transmission rate. For analytical tractability, we assume that all the computation tasks are either compressed at the UE or at the FN. The case that some computation tasks may not be compressed in the UE or FN is beyond our research scope, and can be addressed by some extra work based on our work. After compressing computation tasks, the compressed data is transmitted from the FN to the FAP via the backhaul, and then is sent to the cloud centre for DD. After this, the original computation tasks are retrieved and processed in the cloud centre. In this work, we consider three DC modes based on the proportion of computing tasks being compressed at the UE, namely the local compression mode, the edge compression and the hybrid compression modes. As illustrated in Fig. \ref{fig_cm_model}, for the local and edge compression modes, all computation tasks are compressed at the UEs and the FNs, respectively. For the hybrid compression scheme, some of the computation tasks are compressed at the UEs and the rest of them are compressed at the associated FN. We assume that within the service range of an FN, there are $M_{\rm U}$ UEs that generate computation tasks that need to be uploaded or compressed. 

\begin{figure}
	\centering
	\includegraphics[width = \linewidth]{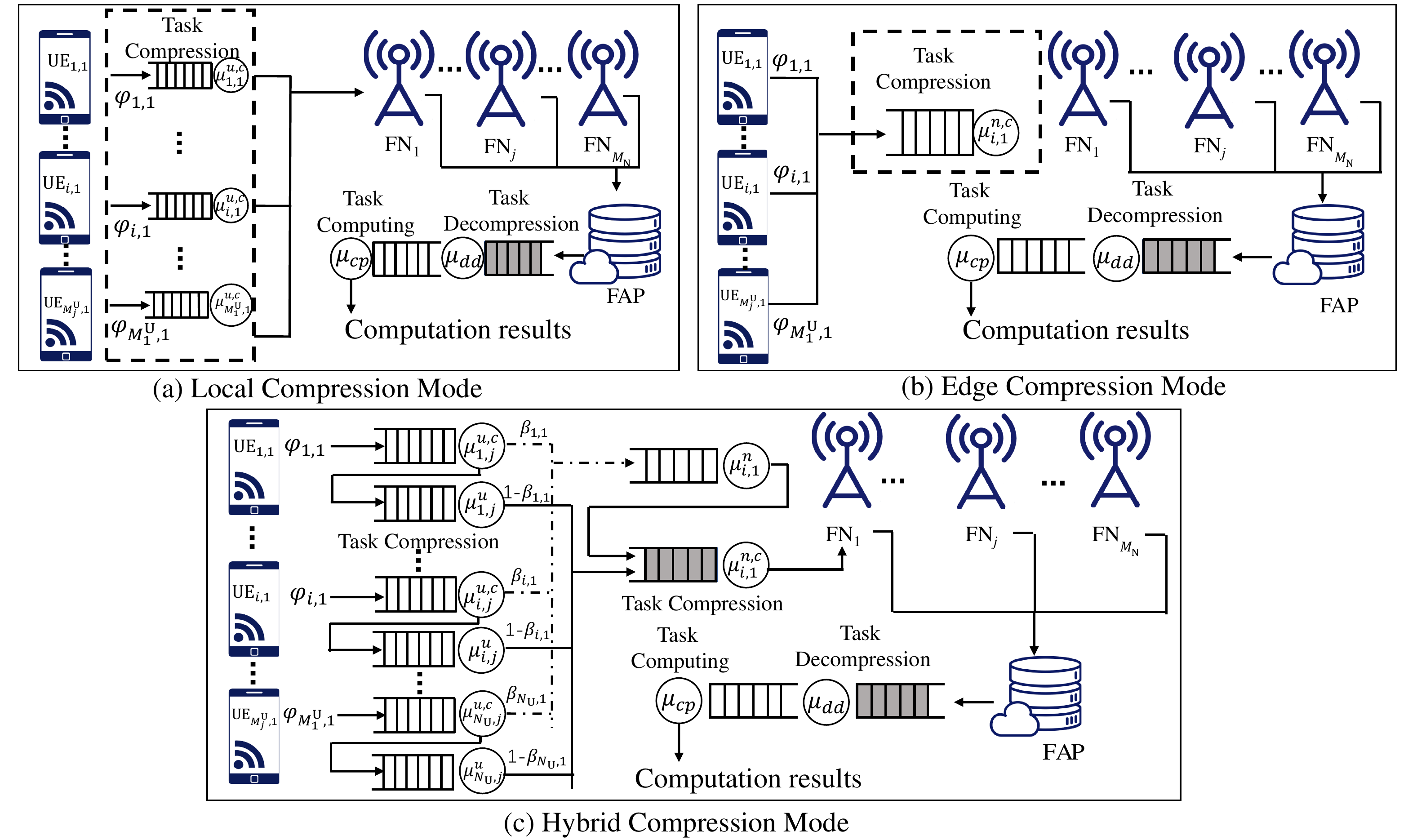}
	\caption{The illustration of three compression modes}	
	\label{fig_cm_model}
\end{figure}

The entire end-to-end delay is defined as the time duration beginning from the computation tasks generation at the UE and ending at the UE obtaining the final computing results.  According to the above discussion, the entire end-to-end latency in this work is composed by the uplink transmission delay from UE to the FN, the DC processing delay in the UE or the FN, the backhaul transmission delay from FN to the FAP, and the DD and task computing delay in the FAP. The downlink transmission delay of the computing results from the FAP to the UE is ignored due to the size of computing results usually have limited size \cite{cui2019joint}. For the aforementioned three DC modes, the backhaul transmission delay $T^{\rm bh}_{i,j}$ and the DD and task computing delay $T^{\rm dp}_{i,j,n}$ can be modelled in general forms as follows. On one hand, recall that the backhaul capacity of FN $X_j^{\rm N}$ to the $X_n^{\rm A}$ is denoted by $C^{\rm bh}_{j,n}$, the backhaul transmission delay can be calculated by
\begin{equation}
\label{eq_delay_bh}
T^{\rm bh}_{j,n}=\frac{\sum_{i=1}^{M_j^{\rm U}}{\gamma_{i,j}\kappa_{i,j}}}{C^{\rm bh}_{j,n}},
\end{equation}
where $\kappa_{i,j}$ and $\gamma_{i,j}$ are the computation task size in bit and the compression ratio of UE $Y_{i,j}$, respectively. Note that the computation task size is assumed to be fixed in this work, and the case of computation task size being random can be addressed by some extra work based on our derived results. On the other hand, when the compressed data arrives at the FAP, it will be firstly sent to the DD processor to retrieve the original data, and then sent to the task computing processor to obtain the computing results. If this process is finished before the new compressed data arrives at the FAP, this new data will follow the same process. Otherwise, the new arrived data has to wait in the queue to be processed in the DD and task computing processors. Therefore, the entire delay of the DD and task computing delay $T^{\rm dp}_{i,j,n}$ consists of the queueing delay and processing delay in the DD and task computing processors. We assume that the service times of DD and task computing both follow the exponential distribution \cite{ren2019data}. For the DD process, if decompressing one bit data of the computation task from UE $Y_{i,j}$ needs $g^{\rm d}_{i,j}$ central processing unit (CPU) cycles and the processor of the FAP $X_n^{\rm A}$ for DD can provide $s_n^{\rm {A,d}}$ CPU cycles per second, the service rate $\mu^{\rm dd}_n$ for DD in the FAP can be calculated by $\frac{\gamma_{i,j}\kappa_{i,j}g^{\rm d}_{i,j}}{s_n^{\rm {A,d}}}$. Similarly, by denoting the required CPU cycles for computing one bit data of the computation task from UE $Y_{i,j}$ as $g^{\rm p}_{i,j}$ and the CPU cycles that the processor of the FAP can provide for computing this task in a second as $s_n^{\rm {A,p}}$, the service rate $\mu^{\rm cp}_n$ for task computing in the FAP $X_n^{\rm A}$ can be calculated by $\frac{\kappa_{i,j}g^{\rm p}_{i,j}}{s_n^{\rm {A,p}}}$. Due to there exists significant differences in the uplink transmission delay and the DC processing delay for the three DC modes, the details of these two delays for each DC mode are discussed as follows.

\textbf{Local Compression Mode:} In this mode, the computation tasks are compressed at the UE before transmitting to the associated FN via the wireless link. The DC process can be modelled following the M/M/1 queue \cite{ren2019data}. By denoting the required CPU cycles for compressing one bit data and the provided CPU cycles per second of the processor in the UE $Y_{i,j}$ as $g^{\rm u,c}_{i,j}$ and $s^{\rm u,c}_{i,j}$, respectively, the service rate $\mu_{i,j}^{u,c}$ of the DC in this UE can be denoted by $\frac{\kappa_{i,j}g^{\rm u,c}_{i,j}}{s^{\rm u,c}_{i,j}}$. The DC delay that consists of queueing and processing delay in the UE $Y_{i,j}$ is denoted by $T_{i,j}^{\rm l,dc}$. In addition, for analytical tractability, the uplink transmission delay $T_{i,j}^{\rm l,ut}$ is evaluated by the average uplink transmission rate $C_{i,j}^{\rm ut}$ \cite{hu2019density} as follows:
\begin{equation}
T_{i,j}^{\rm l,ut}=\frac{\gamma_{i,j}\kappa_{i,j}}{C_{i,j}^{\rm ut}}.
\end{equation}

\textbf{Edge Compression Mode:} In this mode, the computation tasks that generated at the UE are firstly transmitted to the FN and then these computation tasks are compressed at the FN before transmitting to the FAP. For the computation tasks generated at UE $Y_{i,j}$, the uplink transmission delay $T_{i,j}^{\rm e,ut}$ for the edge compression mode can be calculated by
\begin{equation}
T_{i,j}^{\rm e,ut}=\frac{\kappa_{i,j}}{C_{i,j}^{\rm ut}}.
\end{equation}
Similar as the local compression mode, the DC process in the FN is modelled following the M/M/1 queue, and we denote the required CPU cycles for compressing one bit data and the provided CPU cycles per second of the processor in the FN $X_{j}^{\rm N}$ as $g^{\rm n}_{j}$ and $s^{\rm n,c}_{j}$, respectively. Accordingly, the service rate of DC $\mu_{i,j}^{n,c}$ in the FN $X_{j}^{\rm N}$ can be denoted by $\frac{\kappa_{i,j}g^{\rm u,c}_{i,j}}{s^{\rm n,c}_{j}}$. Moreover, the DC delay in the FN $X_j^{N}$ for the edge processing mode is denoted by $T_{i,j}^{\rm e,dc}$.

\textbf{Hybrid Compression Mode:} As illustrated in Fig. \ref{fig_cm_model}, the computation tasks generated at the UE $Y_{i,j}$ has a probability of $\beta_{i,j}$, namely COR, being compressed at the associated FN while the rest of computation tasks are uploaded to the FN for DC. After this, they are all sent to a pre-processing unit in the FN for transmitting on the constrained backhaul, wherever the computation tasks are compressed. To simplify the analysis, we assume that the computing capability of the pre-processing unit is relatively large as compared with other processing steps, thus the delay of the pre-processing is ignored in this work. Equipped with these assumptions, the average uplink transmission delay $T_{i,j}^{\rm h,ut}$ of UE $Y_{i,j}$ can be calculated by:
\begin{equation}
T_{i,j}^{\rm h,ut}=\beta_{i,j}T_{i,j}^{\rm e,ut}+(1-\beta_{i,j})T_{i,j}^{\rm l,ut}.
\end{equation}
The average DC delay $T_{i,j}^{\rm h,dc}$ for the hybrid compression mode can be obtained as follows:
\begin{equation}
T_{i,j}^{\rm h,dc}=\beta_{i,j}T_{i,j}^{\rm e,dc}+(1-\beta_{i,j})T_{i,j}^{\rm l,dc}.
\end{equation}

Therefore, the entire end-to-end delay $T_{i,j,n}^{\rm tot,\xi}$ for UE $Y_{i,j}$ to the FAP $X_{n}^{\rm A}$ in the F-RAN with DC for mode $\xi$ can be evaluated as:
\begin{equation}
T_{i,j,n}^{\rm tot,\xi}=T_{i,j}^{\rm \xi,ut}+T_{i,j}^{\rm \xi,dc}+T^{\rm bh}_{i,j}+T^{\rm dp}_{i,j,n},
\end{equation}
where $\xi\in\{\rm l,e,h\}$ respectively denotes the local compression, the edge compression and hybrid compression modes.


\subsection{Investigated Performance Metrics}
In this work, we mainly investigate the SDCP in the large-scale spatially-random deployed clustered F-RAN. The SDCP is defined based on two folds described as follows. One is the successful transmission probability (STP) $\Delta_{i,j}$ that the SIR of the uplink wireless link from the UE $Y_{i,j}$ to the FN $X_{j}^{\rm N}$ is larger than an SIR threshold $\tau$, i.e., $\Delta_{i,j}=\mathbb{P}({{\rm SIR}_{i,j}}>\tau)$, where ${\rm SIR}_{i,j}$ is defined as the uplink SIR of the UE $Y_{i,j}$ associated with the FN $X_{j}^{N}$. This ensures that the computation tasks can be uploaded to the FN. The other is the successful task execution probability (STEP) $\Xi_{i,j,n}^{\xi}$ that the entire end-to-end delay $T_{i,j,n}^{{\rm tot},\xi}$ defined in Section \ref{sec_sys_mod} is smaller than a target delay $\varrho$, i.e., $\Xi_{i,j,n}^{\xi}=\mathbb{P}_(T_{i,j,n}^{\rm tot,\xi}<\varrho)$. Equipped with these, the SDCP $\Theta_{i,j,n}^{\xi}$ of the UE $Y_{i,j}$ in the clustered F-RAN can be obtained as:
\begin{equation}
\label{equ_sdcp_def}
\Theta_{i,j,n}^{\xi}=\Delta_{i,j}  \Xi_{i,j,n}^{\xi}.
\end{equation}
The derivation of the STP and the STEP will be provided in the next section. The used notations and variables in this work are summarised in Table \ref{notations}.
\begin{table}
	\centering
	\caption{NOTATIONS AND VARIABLES}\label{notations}
	\begin{tabular}{|c|l|}
		\hline
		Notations & Definitions \\
		\hline
		$\lambda_{\rm N}$& The density of the FNs following GPP.\\
		\hline
		$M_j^{\rm U}$, $M_{\rm N}$ , $M_{\rm A}$ & The number of UEs, FNs and FAPs. \\
		\hline
		${C^{\rm bh}_{j,n}}$ & The backhaul capacity from the $j$-th FN to the $n$-th FAP. \\
		\hline
		$Y_{i,j}$ & The location of the $i$-th UE served by the $j$-th FN. \\
		\hline
		$X^{\rm N}_j$ & The location of the $j$-th FN. \\
		\hline
		$R_{i,j}$ & The distance between the UE $Y_{i,j}$ and the FN $X^{\rm N}_j$\\
		\hline
		$B$ & The subchannel bandwidth for each UE. \\
		\hline
		$T^{\rm bh}_{i,j}$ & The backhaul transmission delay of the $i$-th UE served by the $j$-th FN. \\
		\hline
		$T^{\rm dp}_{i,j,n}$ & The DD and task computing delay at the $n$-th FAP for the $i$-th UE served by the $j$-th FN. \\
		\hline
	    $T_{i,j}^{\rm l,ut}$ & The uplink transmission delay between the $i$-th UE and the $j$-th FN for local compression mode.  \\
		\hline
		$T_{i,j}^{\rm l,dc}$ & The local DC delay for the $i$-th UE served by the $j$-th FN. \\
		\hline
		$T_{i,j}^{\rm e,ut}$ & The uplink transmission delay between the $i$-th UE and the $j$-th FN for edge compression mode. \\
		\hline
		$T_{i,j}^{\rm e,dc}$ & The edge DC delay for the $i$-th UE served by the $j$-th FN. \\
		\hline
		$\beta_{i,j}$ & The compression offloading ratio for the $i$-th UE served by the $j$-th FN. \\
	    \hline
	    $g^{\rm u,c}_{i,j}$ & The required CPU cycles for compressing one bit data at the $i$-th UE served by the $j$-th FN. \\
	    \hline
	    $g^{\rm n}_{j}$ & The required CPU cycles for compressing one bit data at the $j$-th FN. \\
	    \hline
	    $g^{\rm d}_{i,j}$ & The required CPU cycles for decompressing one bit data from the $i$-th UE served by the $j$-th FN. \\
	    \hline
	    $g^{\rm p}_{i,j}$ & The required CPU cycles for computing one bit data from the $i$-th UE served by the $j$-th FN. \\
	    \hline
	     $\mu_{i,j}^{u,c}$ & The service rate of the DC process for the $i$-th UE served by the $j$-th FN. \\
	    \hline
	     $\mu_{i,j}^{n,c}$ & The service rate of the DC process for the $j$-th FN. \\
	    \hline
	     $\mu_n^{\rm dd}$ & The service rate of the DD process for the $n$-th FAP. \\
	    \hline
	      $\mu_n^{\rm cp}$ & The service rate of the cloud computing process for the $n$-th FAP \\
	    \hline
	    $\tau$ & The SIR threshold for the uplink transmission. \\
	    \hline
	    $s^{\rm u,c}_{i,j}$ & The basic frequency of the $i$-th UE served by the $j$-th FN.    \\
	    \hline
	    $s^{\rm n,c}_{j}$ & The basic frequency of the $j$-th FN.    \\
	    \hline
	    $s^{\rm A,d}_{n}$ & The basic frequency of the $n$-th FAP.    \\
	    \hline
	    $\gamma_{i,j}$    & The task packet compression ratio. \\
	    \hline
	    $p$ & The highest UE transmit power. \\
	    \hline
	    $\alpha$  & The pathloss exponent for the data transmission.\\
	    \hline
	\end{tabular}
\end{table}

\section{Performance Analysis}
\label{sec_perf_analysis}
In this section, firstly, we derive the uplink STP of the UE in the clustered F-RAN. Then the STEP is derived for each DC mode. 

\subsection{Successful Transmission Probability}
The STP indicates the probability of UE being capable of transmitting the computation tasks to the FN. For example, if the uplink SIR of the UE is poor, the computation tasks generating at the UE can only be processed locally. 
The STP of an arbitrary UE in the F-RAN is given in Lemma \ref{lemma_stp} as follows.
\begin{MyLem}
\label{lemma_stp}
In the clustered F-RAN with FNs and their associated UEs deployed following MCP, the STP of an arbitrary UE $Y_{i,j}$ is given as:
\begin{equation}
\label{equ_stp_res}
\Delta_{i,j}\!=\int_0^1{\exp\left[-{\pi\lambda_{\rm N}}\int_0^{\infty}\left(1-\mathcal{H}(v,x)\right){\rm d} x \right]{\rm d} v},
\end{equation}
where $\mathcal{H}(v,x)$ is defined as:
\begin{equation}
\mathcal{H}(v,x)\triangleq\frac{1}{2\pi}\int_{0}^{1}\int_{0}^{2\pi}\frac{1}{1+ \tau(\frac{u^{\epsilon}v^{1-\epsilon}}{cx+cu-2\sqrt{cxu}\cos y})^{\frac{\alpha}{2}}}{\rm d}y{\rm d}u.
\end{equation}
\end{MyLem}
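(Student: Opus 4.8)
The plan is to treat $\Delta_{i,j}=\mathbb{P}(\mathrm{SIR}>\tau)$ as a Laplace-transform-of-interference computation, exploiting that the desired-link fading $h_0$ is exponential with unit mean. Placing the target FN at the origin by Slivnyak's theorem and writing $I=\sum_{k}p\,h_k[l(R_k)]^{-\epsilon}l(|Y_k|)$ for the aggregate interference, the event $\mathrm{SIR}>\tau$ is equivalent to $h_0>\tau I/\big(p\,l(|Y_0|)^{1-\epsilon}\big)$. Since the transmit power $p$ cancels between signal and interference and $\mathbb{P}(h_0>t\mid\text{rest})=e^{-t}$, I would first condition on the desired UE distance $|Y_0|=r_0$ and obtain
\begin{equation}
\Delta_{i,j}(r_0)=\mathbb{E}\!\left[\exp\!\Big(-\tfrac{\tau}{l(r_0)^{1-\epsilon}}\textstyle\sum_{k}h_k[l(R_k)]^{-\epsilon}l(|Y_k|)\Big)\right],
\end{equation}
with the outer expectation over the FN process, the interfering-UE positions, and the fadings $\{h_k\}$.

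Next I would resolve the expectation in the three-stage order dictated by the MCP hierarchy. Because the $h_k$ are i.i.d.\ unit-mean exponentials and independent across interferers, averaging each factor yields $1/\big(1+s[l(R_k)]^{-\epsilon}l(|Y_k|)\big)$ with $s=\tau/l(r_0)^{1-\epsilon}$. Then, using that each interfering UE is uniform in the disk of radius $1/\sqrt{c}$ about its own FN and independent across clusters (the orthogonal-resource assumption guarantees exactly one interferer per foreign cluster), I would push the intra-cluster position average inside the product, and finally apply the probability generating functional of the parent PPP $\Phi^{\rm N}$ of intensity $\lambda_{\rm N}$, giving
\begin{equation}
\Delta_{i,j}(r_0)=\exp\!\Big(-\lambda_{\rm N}\!\int_{\mathbb{R}^2}\big(1-\mathbb{E}_{Y}\big[\tfrac{1}{1+s[l(R)]^{-\epsilon}l(|Y|)}\big]\big)\,\mathrm{d}x\Big),
\end{equation}
where, for a cluster centre at $x$, $Y=x+W$ and $R=|W|$ with $W$ uniform in the disk.

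The remaining work is a change-of-variables exercise that collapses the three integrations into the stated form. For the desired UE I would set $v=c\,r_0^2$; since $r_0$ has density $2c\,r_0$ on $[0,1/\sqrt{c}]$, this makes $v$ uniform on $[0,1]$ and produces the outer $\int_0^1\!\cdot\,\mathrm{d}v$ after deconditioning. For the FN radial coordinate I would pass to polar coordinates: after the intra-cluster average the kernel depends only on the centre distance $d$, so the centre angle contributes $2\pi$, which together with the Jacobian of $x=d^2$ (i.e.\ $d\,\mathrm{d}d=\tfrac12\mathrm{d}x$) turns $\lambda_{\rm N}$ into exactly the prefactor $\pi\lambda_{\rm N}$ in front of $\int_0^\infty(1-\mathcal H)\,\mathrm{d}x$. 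For the intra-cluster average I would substitute $u=cR^2$ (uniform on $[0,1]$) together with the offset angle $y$, reproducing the $\tfrac{1}{2\pi}\int_0^1\!\int_0^{2\pi}$ structure of $\mathcal H$. Expanding $|Y|^2$ by the law of cosines in terms of the centre distance and the offset and inserting the fractional power-control exponents then turns $s[l(R)]^{-\epsilon}l(|Y|)$ into $\tau\big(u^{\epsilon}v^{1-\epsilon}/(c|Y|^2)\big)^{\alpha/2}$, with $c|Y|^2$ expressed through $x$, $u$ and $y$, which is precisely the integrand of $\mathcal H(v,x)$.

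I expect the main obstacle to be the geometric bookkeeping of the MCP rather than any single hard estimate: one must simultaneously track the desired UE's offset inside the target cluster, each interferer's offset inside its own cluster, and the parent FN locations, and then verify that the nested substitutions $v=c\,r_0^2$, $u=cR^2$ and $x=d^2$ are mutually consistent and jointly reproduce both the $\pi\lambda_{\rm N}$ prefactor and the rescaled law-of-cosines denominator. The analytic steps (the exponential averaging and the generating-functional evaluation) are routine once the independence afforded by orthogonal resources is invoked; it is the careful coordinate reduction that carries the proof.
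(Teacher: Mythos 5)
Your proposal is correct and takes essentially the same route as the paper's own proof: Slivnyak's theorem, the Rayleigh-fading (Laplace-transform) step, per-interferer averaging over the fading and the intra-cluster offset, the PGFL of the parent PPP, and the normalizing substitutions $v=c\,r_0^2$, $u=cR^2$, $x=d^2$ all appear with the same roles in the paper's Appendix A. The only point of divergence is trivial: carrying your substitutions through yields the denominator $cx+u-2\sqrt{cxu}\cos y$ rather than the stated $cx+cu-2\sqrt{cxu}\cos y$, which appears to be a typo in the paper rather than a gap in your argument.
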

\begin{proof}
See Appendix \ref{app_stp}.
\end{proof}
With the results given in \eqref{equ_stp_res}, the STP of the UE with FNs positioned following the PPP can be analysed. Nevertheless, this theoretical result is still time-consuming due to there exists a four-dimensional integration. To reduce the computational complexity, and note that the distances between the interfering UEs and their associated FAPs have trivial effect on the aggregate interference generated by the interfering UEs, we accordingly give the approximated results for \eqref{equ_stp_res} in Proposition \ref{prop_stp_appx} as follows.

\begin{MyProp}
\label{prop_stp_appx}
In the F-RAN, the approximated STP $\widetilde{\Delta}_{i,j}$ that the uplink SIR being larger than a threshold $\tau$ can be obtained by
\begin{equation}
\label{eq_stp_appx}
\widetilde{\Delta}_{i,j}=\left\{
\begin{aligned}
&E_{\frac{\epsilon}{\epsilon-1}}(\zeta)+\zeta^{\frac{1}{\epsilon-1}}\Gamma\left(1+\frac{1}{1-\epsilon}\right),  \quad 0\leq\epsilon<1,\\
&\exp(-\zeta), \quad \epsilon=1.
\end{aligned}
 \right.
\end{equation}
where $\zeta=\frac{2\pi^2\lambda_{\rm N}\tau^{\frac{2}{\alpha}} v^{1-\epsilon}}{c\alpha(1+\epsilon)\sin(\frac{2\pi}{\alpha})}$, $E_{\frac{\epsilon}{\epsilon-1}}(\zeta)$ is the two-argument exponential integral function defined as $\int_1^{\infty}\frac{e^{-\zeta t}}{t^{\frac{\epsilon}{\epsilon-1}}}{\rm d}t$.
\end{MyProp}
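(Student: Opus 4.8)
The plan is to start from the exact expression in Lemma~\ref{lemma_stp} and apply the stated simplifying approximation to the kernel $\mathcal{H}(v,x)$, after which the nested integrals can be evaluated in closed form and the final branch in $\epsilon$ emerges naturally. Recall that inside $\mathcal{H}(v,x)$ the term $cx+cu-2\sqrt{cxu}\cos y$ is the (scaled) squared distance from an interfering UE to the target FN obtained from the law of cosines, where $cx$ corresponds to the interfering FN and $cu$ (with angle $y$) to the displacement of the interfering UE from its own FN. The approximation that the displacement of each interfering UE relative to its serving FN has negligible effect on the aggregate interference amounts to replacing this squared distance by $cx$, i.e. treating each interfering UE as co-located with its FN for the path loss to the origin, while retaining the factor $u^{\epsilon}$ coming from its fractional power control. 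Since the resulting integrand no longer depends on $y$, the angular integral contributes a factor $2\pi$ that cancels the $\tfrac{1}{2\pi}$ prefactor, leaving the single integral
\begin{equation}
\widetilde{\mathcal{H}}(v,x)=\int_0^1\frac{1}{1+\tau\left(\frac{u^{\epsilon}v^{1-\epsilon}}{cx}\right)^{\frac{\alpha}{2}}}{\rm d}u.
\end{equation}

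Next I would compute the inner exponent $\pi\lambda_{\rm N}\int_0^\infty\bigl(1-\widetilde{\mathcal{H}}(v,x)\bigr){\rm d}x$. Writing $1-\widetilde{\mathcal{H}}$ as a $u$-integral and exchanging the order of integration (legitimate since the integrand is nonnegative), I first carry out the $x$-integral. Setting $A=\tau(u^{\epsilon}v^{1-\epsilon}/c)^{\alpha/2}$, the inner integral has the form $\int_0^\infty \frac{A}{x^{\alpha/2}+A}{\rm d}x$, which after the substitution $w=x^{\alpha/2}$ reduces to the standard Mellin-type integral $\int_0^\infty\frac{w^{2/\alpha-1}}{w+A}{\rm d}w=\frac{\pi A^{2/\alpha-1}}{\sin(2\pi/\alpha)}$, valid for $\alpha>2$. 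This produces a contribution proportional to $A^{2/\alpha}=\tau^{2/\alpha}u^{\epsilon}v^{1-\epsilon}/c$, after which the remaining $u$-integral is elementary, $\int_0^1 u^{\epsilon}{\rm d}u=1/(1+\epsilon)$. Collecting the constants gives exactly the exponent $\zeta=\frac{2\pi^2\lambda_{\rm N}\tau^{2/\alpha}v^{1-\epsilon}}{c\alpha(1+\epsilon)\sin(2\pi/\alpha)}$ appearing in the statement.

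Finally, I would perform the outer integration $\widetilde{\Delta}_{i,j}=\int_0^1\exp(-\zeta){\rm d}v$, where $\zeta=Kv^{1-\epsilon}$ with $K$ the $v$-independent constant above. For $\epsilon=1$ the exponent does not depend on $v$, so the integral collapses immediately to $\exp(-\zeta)$, which is the second branch. For $0\le\epsilon<1$ the substitution $t=Kv^{1-\epsilon}$ turns the integral into a lower-incomplete-gamma integral of the form $\int_0^{K}e^{-t}t^{\epsilon/(1-\epsilon)}{\rm d}t$ up to constants; splitting this into the complete gamma integral over $[0,\infty)$ and the tail over $[K,\infty)$, and using the identity relating the upper incomplete gamma function to the two-argument exponential integral, $\Gamma(1-a,\zeta)=\zeta^{1-a}E_a(\zeta)$ with $a=\tfrac{\epsilon}{\epsilon-1}$, yields the reported combination of $E_{\frac{\epsilon}{\epsilon-1}}(\zeta)$ and $\Gamma\bigl(1+\tfrac{1}{1-\epsilon}\bigr)\zeta^{1/(\epsilon-1)}$.

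I expect the main obstacle to be twofold. First, the approximation in the first step is the only non-rigorous ingredient, and some care is needed to justify (or at least motivate) that discarding the $cu-2\sqrt{cxu}\cos y$ terms while keeping $u^{\epsilon}$ is the correct way to decouple the cluster geometry from the power control; a reasonable consistency check is that the approximation should become exact in the tight-cluster scaling $c\to\infty$. Second, the special-function bookkeeping in the last step — keeping the exponents $\tfrac{\epsilon}{\epsilon-1}$ and $\tfrac{1}{\epsilon-1}$ together with their multiplicative constants consistent when passing between the incomplete-gamma and exponential-integral representations — is error-prone, so I would verify the final expression against the elementary closed form at $\epsilon=0$, where $\int_0^1 e^{-Kv}{\rm d}v=(1-e^{-K})/K$, as an independent check before committing to the stated form.
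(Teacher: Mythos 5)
Your route is the same as the paper's: you apply the identical approximation (each interfering UE is treated as co-located with its serving FN for the path loss to the origin, while the power-control factor $u^{\epsilon}$ is retained), you evaluate the resulting exponent through the same Mellin-type integral $\int_0^\infty \frac{w^{2/\alpha-1}}{w+A}\,{\rm d}w=\frac{\pi A^{2/\alpha-1}}{\sin(2\pi/\alpha)}$ followed by $\int_0^1 u^{\epsilon}\,{\rm d}u=\frac{1}{1+\epsilon}$, and you arrive at $\int_0^1 \exp\left(-Kv^{1-\epsilon}\right){\rm d}v$ with exactly the paper's constant. The only difference is the finish: the paper's appendix simply states that this last integral is "calculated in Wolfram Mathematica," whereas you evaluate it by hand via the incomplete gamma function, which is a more rigorous way to close the argument.

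It is precisely in that closing step, however, that your proposal as written would fail. Carrying the substitution $t=Kv^{1-\epsilon}$ through correctly gives
\begin{equation*}
\int_0^1 e^{-Kv^{1-\epsilon}}\,{\rm d}v
=\frac{K^{-\frac{1}{1-\epsilon}}}{1-\epsilon}\left[\Gamma\left(\tfrac{1}{1-\epsilon}\right)-\Gamma\left(\tfrac{1}{1-\epsilon},K\right)\right]
=K^{\frac{1}{\epsilon-1}}\Gamma\left(1+\tfrac{1}{1-\epsilon}\right)-\frac{1}{1-\epsilon}E_{\frac{\epsilon}{\epsilon-1}}(K),
\end{equation*}
so the exponential-integral term enters with coefficient $-\frac{1}{1-\epsilon}$, not $+1$ as in \eqref{eq_stp_appx}; your identity $\Gamma(1-a,\zeta)=\zeta^{1-a}E_a(\zeta)$ is right, but assembling the pieces does not "yield the reported combination." Your own proposed sanity check at $\epsilon=0$ detects this immediately: the true value is $(1-e^{-K})/K$, whereas the stated formula evaluates to $E_0(K)+K^{-1}\Gamma(2)=(1+e^{-K})/K$, which exceeds $1$ for small $K$ and hence cannot be a probability. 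In other words, the proposition as printed carries a sign/coefficient error on the $E_{\frac{\epsilon}{\epsilon-1}}$ term (unsurprising given that the paper's own proof delegates this step to software and transcribes the output), and a complete write-up must either flag that discrepancy or would have to contain a compensating algebra mistake to force agreement. A minor related point you handle implicitly but should state: the $\zeta$ appearing in the statement still contains the integration variable $v$, which is a dangling variable after the outer integral; it must be read as the $v$-independent constant $K$.
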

\begin{proof}
See Appendix \ref{appd_prop_stp_appx}.
\end{proof}

The computational complexity has been significantly reduced by removing the two-dimensional integration for the actual interfering UE positions. 

\subsection{Successful Task Execution Probability}

The STEP is defined as the entire end-to-end delay $T_{i,j,n}^{\rm tot, \xi}$ being smaller than a target delay $\varrho$. This target delay can be treated as the latency requirement of UE on obtaining the results of computation tasks. As the entire end-to-end delay $T_{i,j,n}^{\rm tot, \xi}$ is composed of the uplink transmission delay, the DC delay, the backhaul transmission delay and the DD and task computation delay, to obtain the successful task execution probability, these delay distributions for each DC mode should be investigated. Note that the backhaul transmission delay and the DD and task computation delay are the same for each mode. Thus we firstly derive the results for these two delays. Recall that the backhaul transmission delay from the FN $X_j^{\rm N}$ to the FAP $X_n^{\rm A}$ for each mode has already been given in \eqref{eq_delay_bh}. Next we will derive the distribution of DD and task computing delay in the FAP, which is another main technical part of this work.   

By assuming the service rates of DD and task computing both follow the exponential distribution, the total service rate of the two-step computing process, i.e., DD and task computing, in the FAP can be modelled following the hypo-exponential distribution \cite{mor2013performance}. The probability density function (PDF) of the total service rate $\mu_{n}^{\rm A}$ can be denoted by:
\begin{equation}
\label{equ_pdf_serv_time_fap}
f_{\mu_{n}^{\rm A}}(x) = \frac{\mu_n^{\rm dd}\mu_n^{\rm cp}}{\mu_n^{\rm dd}-\mu_n^{\rm cp}}\left[\exp(-\mu_n^{\rm cp}x)-\exp(-\mu_n^{\rm dd}x) \right].
\end{equation}
Equipped with this, the distribution of the entire delay $T_{i,j,n}$ spent on the DD and task computing in the FAP is given in Lemma \ref{lemma_lt_dd_tc_time} as follows.
\begin{MyLem}
\label{lemma_lt_dd_tc_time}
The distribution of the entire delay $T_{i,j,n}$ spent on the DD and task computing in the FAP $X_n^{\rm A}$, denoted by $f_{T_{i,j,n}^{\rm dp}}(t)$, is given as:
\begin{equation}
\label{eq_dp_delay_distr}
f_{T_{i,j,n}^{\rm dp}}(t)\!=\!\frac{\mu_n^{\rm dd}\mu_n^{\rm cp}\left(1-\rho \right) }{\eta_n}e^{\frac{(\Lambda_n^{\rm A}-\widehat{\mu}_n^{A})}{2}t}\left(e^{\frac{\eta_n}{2}t}-e^{-\frac{\eta_n}{2}t} \right),
\end{equation}
where $\Lambda_n^{\rm A}$ is defined as the aggregate computation task arriving rate at the FAP $X_{n}^{\rm A}$, the term $\widehat{\mu}_n^{A}=\mu_n^{\rm dd}\!+\!\mu_n^{\rm cp}$, and $\eta_n=\sqrt{\left(\Lambda_n^{\rm A}+\widehat{\mu}_n^{A}\right)^2-4\mu_n^{\rm dd}\mu_n^{\rm cp}}$ for denotational simplicity.
\end{MyLem}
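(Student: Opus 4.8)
The plan is to recognise the combined data-decompression-and-task-computing facility at the FAP $X_n^{\rm A}$ as a single server fed by a Poisson stream, so that the total delay $T_{i,j,n}^{\rm dp}$ is exactly the response (sojourn) time of an M/G/1 queue. The arrivals form a Poisson process with aggregate rate $\Lambda_n^{\rm A}$, while the service requirement of a task is the sum of two independent exponential stages of rates $\mu_n^{\rm dd}$ and $\mu_n^{\rm cp}$; this is precisely the hypo-exponential service density already recorded in \eqref{equ_pdf_serv_time_fap}. Its Laplace--Stieltjes transform (LST) factors as the product of the two stage transforms,
\[
B^{\ast}(s)=\frac{\mu_n^{\rm dd}}{\mu_n^{\rm dd}+s}\cdot\frac{\mu_n^{\rm cp}}{\mu_n^{\rm cp}+s},
\]
and the server utilisation is $\rho=\Lambda_n^{\rm A}\bigl(1/\mu_n^{\rm dd}+1/\mu_n^{\rm cp}\bigr)$, which we assume to be less than one for stability.

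First I would write the sojourn-time LST as the product of the Pollaczek--Khinchine waiting-time transform and the service transform,
\[
T^{\ast}(s)=\frac{(1-\rho)\,s\,B^{\ast}(s)}{s-\Lambda_n^{\rm A}\bigl(1-B^{\ast}(s)\bigr)}.
\]
The decisive algebraic step is the identity $(\mu_n^{\rm dd}+s)(\mu_n^{\rm cp}+s)-\mu_n^{\rm dd}\mu_n^{\rm cp}=s\,(s+\widehat{\mu}_n^{A})$, which cancels the common factor $s$ between numerator and denominator and collapses $T^{\ast}(s)$ to the rational function
\[
T^{\ast}(s)=\frac{(1-\rho)\,\mu_n^{\rm dd}\mu_n^{\rm cp}}{s^{2}+\bigl(\widehat{\mu}_n^{A}-\Lambda_n^{\rm A}\bigr)s+\mu_n^{\rm dd}\mu_n^{\rm cp}-\Lambda_n^{\rm A}\widehat{\mu}_n^{A}}.
\]

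Next I would factor the quadratic denominator. Its discriminant simplifies to $\bigl(\Lambda_n^{\rm A}+\widehat{\mu}_n^{A}\bigr)^{2}-4\mu_n^{\rm dd}\mu_n^{\rm cp}=\eta_n^{2}$, so the two poles are real and equal to $s_{1,2}=\tfrac12\bigl[(\Lambda_n^{\rm A}-\widehat{\mu}_n^{A})\pm\eta_n\bigr]$, with $s_{1}-s_{2}=\eta_n$. A partial-fraction decomposition then gives
\[
T^{\ast}(s)=\frac{(1-\rho)\,\mu_n^{\rm dd}\mu_n^{\rm cp}}{\eta_n}\left(\frac{1}{s-s_{1}}-\frac{1}{s-s_{2}}\right),
\]
and inverting each simple pole via the pair $\tfrac{1}{s-s_i}\leftrightarrow e^{s_i t}$ yields a difference of exponentials $e^{s_1 t}-e^{s_2 t}$. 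Extracting the common factor $e^{(\Lambda_n^{\rm A}-\widehat{\mu}_n^{A})t/2}$ from this difference reproduces the bracketed $\bigl(e^{\eta_n t/2}-e^{-\eta_n t/2}\bigr)$ of \eqref{eq_dp_delay_distr} and hence the claimed density.

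I expect the main obstacle to be conceptual rather than computational: justifying that the two-stage facility at the FAP must be modelled as M/G/1 with a general (here hypo-exponential) service time, in contrast with the M/M/1 model adopted in earlier work, and then verifying that the resulting expression is a genuine density. The latter amounts to checking that $\eta_n^2\ge 0$---which follows from $(\Lambda_n^{\rm A}+\widehat{\mu}_n^{A})^{2}\ge(\widehat{\mu}_n^{A})^{2}\ge 4\mu_n^{\rm dd}\mu_n^{\rm cp}$ by the arithmetic--geometric-mean inequality---so that both poles are real, and that $\rho<1$ forces both poles to be negative (their product $\mu_n^{\rm dd}\mu_n^{\rm cp}-\Lambda_n^{\rm A}\widehat{\mu}_n^{A}$ is then positive and their sum $\Lambda_n^{\rm A}-\widehat{\mu}_n^{A}$ negative), guaranteeing that $f_{T_{i,j,n}^{\rm dp}}$ stays nonnegative and integrates to one. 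Once these points are settled, the cancellation identity and the partial-fraction inversion are routine.
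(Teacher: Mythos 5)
Your proposal is correct and follows essentially the same route as the paper's proof: recognize the FAP's two-stage facility as an M/G/1 queue with Poisson aggregate arrivals of rate $\Lambda_n^{\rm A}$ and hypo-exponential service, apply the Pollaczek--Khinchine sojourn-time transform with $\mathcal{L}_{T_{i,j,n}^{\rm dp,se}}(s)=\frac{\mu_n^{\rm dd}\mu_n^{\rm cp}}{(s+\mu_n^{\rm dd})(s+\mu_n^{\rm cp})}$ and $\rho=\Lambda_n^{\rm A}\bigl(1/\mu_n^{\rm dd}+1/\mu_n^{\rm cp}\bigr)$, and invert the resulting rational transform. The only difference is that you carry out the inversion analytically (cancelling $s$, factoring the quadratic with discriminant $\eta_n^2$, and partial fractions) and verify realness/negativity of the poles and unit mass, whereas the paper obtains the same inverse transform symbolically in Matlab and only remarks that the density integrates to one.
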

\begin{proof}
See Appendix \ref{appd_lemma_lt_dd_tc_time}.
\end{proof}
With the results in \eqref{eq_delay_bh} and \eqref{eq_dp_delay_distr}, the results or distributions of the backhaul transmission delay the DD and task computation delay are obtained, which are both in closed-form. Equipped with these, the STEP of the UE $Y_{i,j}$ is given in Theorem \ref{theo_step} as follows. 
\begin{MyTheo}
\label{theo_step}
In the clustered F-RAN, the STEP $\Xi_{i,j,n}^{\xi}$ of the UE $Y_{i,j}$ associated with $X_j^{\rm N}$ using the compression mode $\xi$, $\xi\in\{\rm l,e,h\}$, can be calculated by
\begin{equation}
\label{equ_clo_step}
\begin{aligned}
&\Xi_{i,j,n}^{\xi}=\frac{\mu_n^{\rm dd}\mu_n^{\rm cp}\!-\!\Lambda_n^{\rm A}}{\eta_n}
  \left\{
  \left(1+\frac{\beta_{i,j}^{\xi} \sigma_{i,j}^{\rm U} }
  {\widehat{\beta}_{i,j}^{\xi}\sigma_{j}^{\rm N}-\beta_{i,j}^{\xi} \sigma_{i,j}^{\rm U}}\right)
  \left[\mathcal{A}\left(\varsigma_n^{\rm A},{\widehat{\beta}_{i,j}^{\xi}},{\sigma_{i,j}^{\rm U}},\varrho'_{\xi}\right) 
  -
  \mathcal{A}\left(\varsigma_n^{\rm A'},{\widehat{\beta}_{i,j}^{\xi}},{\sigma_{i,j}^{\rm U}},\varrho'_{\xi}\right) 
  \right] \right.\\ 
&\left.  \!+\!\frac{1}
  {\frac{\widehat{\beta}_{i,j}^{\xi}\sigma_{j}^{\rm N}}{\beta_{i,j}^{\xi} \sigma_{i,j}^{\rm U}}\!-\!1}
  \left[
  \mathcal{A}\left(\varsigma_n^{\rm A'},{\beta_{i,j}^{\xi}},{\sigma_{j}^{\rm N}},\varrho'_{\xi}\right) 
  \!-\!
  \mathcal{A}\left(\varsigma_n^{\rm A},{\beta_{i,j}^{\xi}},{\sigma_{j}^{\rm N}},\varrho'_{\xi}\right) 
  \right]
  \!-\!\left(
  \frac{2}{\varsigma_n^{\rm A}}e^{\frac{\varsigma_n^{\rm A}}{2} \varrho'_{\xi}}
  \!-\!\frac{2}{\varsigma_n^{\rm A'}}e^{\frac{\varsigma_n^{\rm A'}} {2} \varrho'_{\xi}}
  \!-\!\frac{2}{\varsigma_n^{\rm A}}\!+\!\frac{2}{\varsigma_n^{\rm A'}}
  \right)
  \right\},
\end{aligned}
\end{equation}
where $\varsigma_n^{\rm A}=\Lambda_n^{\rm A}-\widehat{\mu}_n^{A}-\eta_n$ and $\varsigma_n^{\rm A'}=\Lambda_n^{\rm A}-\widehat{\mu}_n^{A}+\eta_n$. For denotational simplicity, we define the function $\mathcal{A}(\cdot)$ as:
\begin{equation}
\mathcal{A}\left({a},{b},{c},x\right)=
	\frac{2b}{2 c + b a}
	e^{-\frac{cx}{b}}	
	\left(e^{\frac{2 c + ba}{2b} x}-1  \right),
\end{equation}
where the variables $a$, $b$, and $c$ follow $a\in\{\varsigma_n^{\rm A},\varsigma_n^{\rm A'}\}$, $b\in \{\beta_{i,j}^{\xi},\widehat{\beta}_{i,j}^{\xi}=1-\beta_{i,j}^{\xi} \}$ and $c\in\{\sigma_{i,j}^{\rm U},\sigma_{j}^{\rm N}\}$. The terms $\sigma_{j}^{\rm N}=\mu_{j}^{\rm n,c}-\Lambda_{j}^{\rm N}$ and $\sigma_{i,j}^{\rm U}=\mu_{i,j}^{\rm u,c}-\Lambda_{i,j}$. Additionally, the term $\varrho'_{\xi}$ denotes the delay threshold for task compression, decompression and processing in the compression mode $\xi$, which can be expressed as $\varrho'_{\xi}=\varrho-\bar{T}_{i,j}^{\rm \xi,ut}-T^{\rm bh}_{j,n}$. The term $\bar{T}_{i,j}^{\rm \xi,ut}$ can be calculated by:
\begin{equation}
\label{equ_avg_up_trans}
\bar{T}_{i,j}^{\rm \xi,ut}=\frac{\beta_{i,j}\gamma_{i,j}\kappa_{i,j}+\widehat{\beta}_{i,j}\kappa_{i,j}}{\int_0^{\infty}\log_2(1+\tau)\widetilde{\Delta}_{i,j}(\tau){\rm d}\tau},
\end{equation}
and the term $T^{\rm bh}_{j,n}$ is given in \eqref{eq_delay_bh}.
\end{MyTheo}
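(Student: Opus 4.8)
The plan is to notice that two of the four delay components are deterministic, absorb them into a reduced deadline, and then express the STEP as the CDF of the sum of the two remaining \emph{independent} random delays, which I evaluate by repeatedly matching inner integrals to the primitive $\mathcal{A}(\cdot)$. First I would observe that the average uplink delay $\bar{T}_{i,j}^{\xi,\rm ut}$ in \eqref{equ_avg_up_trans} and the backhaul delay $T^{\rm bh}_{j,n}$ in \eqref{eq_delay_bh} are both deterministic, so that $\mathbb{P}(T_{i,j,n}^{{\rm tot},\xi}<\varrho)=\mathbb{P}\!\left(T_{i,j}^{\xi,\rm dc}+T_{i,j,n}^{\rm dp}<\varrho'_\xi\right)$ with $\varrho'_\xi=\varrho-\bar{T}_{i,j}^{\xi,\rm ut}-T^{\rm bh}_{j,n}$. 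Because the compression queue (at the UE or FN) and the decompression/computing queue (at the FAP) are physically distinct, $T_{i,j}^{\xi,\rm dc}$ and $T_{i,j,n}^{\rm dp}$ are independent, and the STEP is the convolution CDF $\Xi_{i,j,n}^{\xi}=\int_0^{\varrho'_\xi} f_{T_{i,j,n}^{\rm dp}}(t)\,F_{\rm dc}(\varrho'_\xi-t)\,{\rm d}t$.

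Next I would characterise $T_{i,j}^{\xi,\rm dc}$. Each compression queue is M/M/1, so its sojourn time is exponential, $T_{i,j}^{\rm l,dc}\sim\mathrm{Exp}(\sigma_{i,j}^{\rm U})$ and $T_{i,j}^{\rm e,dc}\sim\mathrm{Exp}(\sigma_j^{\rm N})$. The crucial modelling point is that in the hybrid mode $T_{i,j}^{\rm h,dc}=\beta_{i,j}T_{i,j}^{\rm e,dc}+\widehat{\beta}_{i,j}T_{i,j}^{\rm l,dc}$ is a \emph{sum} of two independent \emph{scaled} exponentials, hence a hypo-exponential variable with rates $\sigma_j^{\rm N}/\beta_{i,j}$ and $\sigma_{i,j}^{\rm U}/\widehat{\beta}_{i,j}$, whose CDF is $F_{\rm dc}(s)=1-\frac{\sigma_j^{\rm N}\widehat{\beta}_{i,j}}{\sigma_j^{\rm N}\widehat{\beta}_{i,j}-\sigma_{i,j}^{\rm U}\beta_{i,j}}e^{-\sigma_{i,j}^{\rm U}s/\widehat{\beta}_{i,j}}+\frac{\sigma_{i,j}^{\rm U}\beta_{i,j}}{\sigma_j^{\rm N}\widehat{\beta}_{i,j}-\sigma_{i,j}^{\rm U}\beta_{i,j}}e^{-\sigma_j^{\rm N}s/\beta_{i,j}}$. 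The local and edge modes are recovered as the limits $\beta_{i,j}\to 0$ and $\beta_{i,j}\to 1$, which is precisely why a single formula in $\beta_{i,j}^{\xi}$ and $\widehat{\beta}_{i,j}^{\xi}$ can cover all three cases.

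The main computation then inserts $F_{\rm dc}$ together with the density from Lemma \ref{lemma_lt_dd_tc_time}, which I would first rewrite compactly as $f_{T_{i,j,n}^{\rm dp}}(t)=\frac{\mu_n^{\rm dd}\mu_n^{\rm cp}(1-\rho)}{\eta_n}\big(e^{\varsigma_n^{\rm A'}t/2}-e^{\varsigma_n^{\rm A}t/2}\big)$ using the definitions of $\varsigma_n^{\rm A},\varsigma_n^{\rm A'}$. The constant ``$1$'' in $F_{\rm dc}$ integrates against $f_{T_{i,j,n}^{\rm dp}}$ to return exactly $F_{T_{i,j,n}^{\rm dp}}(\varrho'_\xi)$, which I expect to reproduce the final parenthetical block $\big(\tfrac{2}{\varsigma_n^{\rm A}}e^{\varsigma_n^{\rm A}\varrho'_\xi/2}-\tfrac{2}{\varsigma_n^{\rm A'}}e^{\varsigma_n^{\rm A'}\varrho'_\xi/2}-\tfrac{2}{\varsigma_n^{\rm A}}+\tfrac{2}{\varsigma_n^{\rm A'}}\big)$. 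For each exponential term of $F_{\rm dc}$ the substitution $u=\varrho'_\xi-t$ turns the inner integral into $\int_0^{\varrho'_\xi}e^{-(c/b)u}e^{(a/2)(\varrho'_\xi-u)}{\rm d}u$, which is exactly $\mathcal{A}(a,b,c,\varrho'_\xi)$ with $a\in\{\varsigma_n^{\rm A},\varsigma_n^{\rm A'}\}$ and the hypo-exponential rate $c/b$ equal to $\sigma_{i,j}^{\rm U}/\widehat{\beta}_{i,j}$ or $\sigma_j^{\rm N}/\beta_{i,j}$; the two mixing coefficients rearrange algebraically into the stated prefactors $\big(1+\tfrac{\beta_{i,j}^{\xi}\sigma_{i,j}^{\rm U}}{\widehat{\beta}_{i,j}^{\xi}\sigma_j^{\rm N}-\beta_{i,j}^{\xi}\sigma_{i,j}^{\rm U}}\big)$ and $\big(\tfrac{\widehat{\beta}_{i,j}^{\xi}\sigma_j^{\rm N}}{\beta_{i,j}^{\xi}\sigma_{i,j}^{\rm U}}-1\big)^{-1}$.

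The hard part will not be any individual integral, since each collapses onto the $\mathcal{A}(\cdot)$ primitive, but rather the modelling decision that the hybrid compression delay is a \emph{sum} (hypo-exponential) rather than a probabilistic \emph{mixture} of the two exponentials: this is what forces $\beta_{i,j}$ to enter as the rate-scaling $c/b$ inside $\mathcal{A}$ instead of as a convex weight, and the entire closed form hinges on it. The remaining effort is bookkeeping, namely tracking the four $\mathcal{A}$-terms with their signs, grouping them into the two bracketed differences in \eqref{equ_clo_step}, and verifying the degenerate limits $\beta_{i,j}\to 0,1$ where the hypo-exponential collapses to a single exponential and the denominator $\sigma_j^{\rm N}\widehat{\beta}_{i,j}-\sigma_{i,j}^{\rm U}\beta_{i,j}$ must be treated by a limiting argument.
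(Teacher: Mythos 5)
Your proposal is correct and reproduces \eqref{equ_clo_step}; it shares the paper's skeleton (absorb the deterministic uplink and backhaul delays into $\varrho'_{\xi}$, model the hybrid compression delay as the weighted \emph{sum} $\beta_{i,j}T_{i,j}^{\rm e,dc}+\widehat{\beta}_{i,j}T_{i,j}^{\rm l,dc}$ of independent exponentials, and integrate against the density of Lemma~\ref{lemma_lt_dd_tc_time}), but organizes the final integration along a genuinely different route. The paper conditions on the edge-compression exponential, which produces a two-dimensional integral of $f_{T_{i,j}^{\rm l,dc}}f_{T_{i,j,n}^{\rm dp}}$ over the region $\mathbb{L}$ in \eqref{eq_appd_step_1}--\eqref{eq_appd_step_4}, and, because that conditioning divides by $\beta_{i,j}^{\xi}$, the local mode $\beta_{i,j}^{\xi}=0$ must be rederived separately at the end of Appendix~\ref{appd_theo_sdcp}. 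You instead marginalize the two compression stages first, quoting the closed-form CDF of the two-rate hypo-exponential with rates $\sigma_{j}^{\rm N}/\beta_{i,j}$ and $\sigma_{i,j}^{\rm U}/\widehat{\beta}_{i,j}$, and then evaluate a single one-dimensional convolution against $f_{T_{i,j,n}^{\rm dp}}$. Your bookkeeping checks out: each inner integral collapses onto the primitive $\mathcal{A}(\cdot)$, the constant term of your $F_{\rm dc}$ yields exactly the last parenthetical block of \eqref{equ_clo_step}, and your mixing weights $\sigma_{j}^{\rm N}\widehat{\beta}_{i,j}/(\sigma_{j}^{\rm N}\widehat{\beta}_{i,j}-\sigma_{i,j}^{\rm U}\beta_{i,j})$ and $\sigma_{i,j}^{\rm U}\beta_{i,j}/(\sigma_{j}^{\rm N}\widehat{\beta}_{i,j}-\sigma_{i,j}^{\rm U}\beta_{i,j})$ are algebraically identical to the theorem's prefactors. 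What your route buys: no two-dimensional region $\mathbb{L}$ is needed, and all three modes $\xi\in\{\rm l,e,h\}$ fall out of one formula, since your $F_{\rm dc}$ degenerates gracefully at $\beta_{i,j}=0$ and $\beta_{i,j}=1$ (the offending coefficient vanishes there), so the paper's separate local-mode derivation becomes unnecessary; what the paper's route buys is that it works directly from the joint density without importing the hypo-exponential CDF as a known result. Two minor remarks: the genuinely singular case is $\widehat{\beta}_{i,j}\sigma_{j}^{\rm N}=\beta_{i,j}\sigma_{i,j}^{\rm U}$, which can occur at an \emph{interior} $\beta_{i,j}$ rather than only at the endpoints you flag, and requires the equal-rate (Erlang-type) limit --- the paper's final formula carries the same removable singularity, so this is not a gap relative to it; and your prefactor $\mu_n^{\rm dd}\mu_n^{\rm cp}(1-\rho)/\eta_n$ agrees with the paper's appendix, whereas the factor $(\mu_n^{\rm dd}\mu_n^{\rm cp}-\Lambda_n^{\rm A})/\eta_n$ printed in the theorem statement appears to be a typographical slip for $\mu_n^{\rm dd}\mu_n^{\rm cp}-\Lambda_n^{\rm A}(\mu_n^{\rm dd}+\mu_n^{\rm cp})$.
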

\begin{proof}
See Appendix \ref{appd_theo_sdcp}.
\end{proof}
Equipped with the results given in Proposition \ref{prop_stp_appx} and Theorem \ref{theo_step}, the SDCP $\Theta_{i,j,n}$ of the clustered F-RAN is given in Corollary \ref{coro_sdcp} as follows:
\begin{MyCoro}
\label{coro_sdcp}
In the clustered F-RAN, the SDCP $\Theta_{i,j,n}^{\xi}$ of the UE $Y_{i,j}$ associated with $X_j^{\rm N}$ using the compression mode $\xi$, $\xi\in\{\rm l,e,h\}$, can be calculated by
\begin{equation}
\label{Coro 1}
\Theta_{i,j,n}^{\xi}=\left\{
\begin{aligned}
&\left[E_{\frac{\epsilon}{\epsilon-1}}(\zeta)+\zeta^{\frac{1}{\epsilon-1}}\Gamma\left(1+\frac{1}{1-\epsilon}\right)\right]\Xi_{i,j,n}^{\xi},  \quad 0\leq\epsilon<1,\\
&\exp(-\zeta)\Xi_{i,j,n}^{\xi}, \quad \epsilon=1.
\end{aligned}
 \right.
\end{equation}
\end{MyCoro}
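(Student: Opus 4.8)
The plan is to treat this corollary as a direct assembly of the two principal results already in hand, namely the closed-form approximated successful transmission probability from Proposition \ref{prop_stp_appx} and the successful task execution probability from Theorem \ref{theo_step}; no new probabilistic machinery is needed. The starting point is the defining identity \eqref{equ_sdcp_def}, which expresses the successful data compression probability as the factorised product $\Theta_{i,j,n}^{\xi}=\Delta_{i,j}\Xi_{i,j,n}^{\xi}$. This factorisation encodes the modelling assumption that the uplink event $\{{\rm SIR}_{i,j}>\tau\}$ and the execution event $\{T_{i,j,n}^{{\rm tot},\xi}<\varrho\}$ are treated as statistically independent, so I would invoke \eqref{equ_sdcp_def} verbatim as the first step and then simply supply the two factors.

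Next I would substitute the closed-form approximation $\widetilde{\Delta}_{i,j}$ of the STP from Proposition \ref{prop_stp_appx} in place of $\Delta_{i,j}$. Because that proposition is stated piecewise in the power-control factor $\epsilon$, the substitution induces the same two-way case split: the bracketed exponential-integral expression $E_{\frac{\epsilon}{\epsilon-1}}(\zeta)+\zeta^{\frac{1}{\epsilon-1}}\Gamma(1+\frac{1}{1-\epsilon})$ on the interference-limited regime $0\leq\epsilon<1$, and the simpler factor $\exp(-\zeta)$ in the full-inversion regime $\epsilon=1$, with the auxiliary quantity $\zeta$ carried over unchanged. Multiplying each branch by the STEP factor $\Xi_{i,j,n}^{\xi}$ taken directly from Theorem \ref{theo_step} then reproduces exactly the two-branch expression claimed in \eqref{Coro 1}. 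No cross-simplification is possible or needed, since $\widetilde{\Delta}_{i,j}$ depends only on the propagation and point-process parameters whereas $\Xi_{i,j,n}^{\xi}$ depends on the queueing and compression parameters, so the two factors are algebraically independent and remain as a product.

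Since the whole argument reduces to substitution and a case split, there is no genuine analytical obstacle here; the result is essentially a one-line combination of the preceding statements. The only point that warrants a sentence of care is making the independence used in the factorisation explicit, as a UE with poor uplink SIR could in principle also experience different effective traffic at its serving node. Under the present model this coupling is neglected and the product form of \eqref{equ_sdcp_def} is taken as the definition, so I would state that assumption once and then obtain \eqref{Coro 1} by inspection.
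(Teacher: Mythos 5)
Your proposal is correct and matches the paper's own proof exactly: the paper likewise invokes the definition \eqref{equ_sdcp_def} and substitutes the results of Proposition \ref{prop_stp_appx} and Theorem \ref{theo_step}, with the two-branch case split in $\epsilon$ inherited directly from the proposition. Your additional remark making the independence assumption behind the product form explicit is a reasonable clarification but introduces nothing beyond what the paper already assumes.
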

\begin{proof}
Recall that the definition of SDCP is given in \eqref{equ_sdcp_def}. By incorporating the results given in Proposition \ref{prop_stp_appx} and Theorem \ref{theo_step}, the SDCP can be yielded. 
\end{proof}

\subsection{Optimal Compression Offloading Ratio}
For the hybrid compression mode, the proportion $\beta_{i,j}$ of task compressed at the FN may have significant effects on the SDCP, especially when the computing capability of the FN is limited. With the SDCP of hybrid mode given in Corollary \ref{coro_sdcp}, the optimal compression proportion at the FN $\beta_{i,j}^{\rm opt}$ to maximize the SDCP in the clustered F-RAN can be expressed as:
\begin{align}
&\arg\max_{\beta_{i,j}}\Theta_{i,j,n}^{\rm h}\\
{\rm s.t.}\quad &0\leq \beta_{i,j} \leq 1.
\end{align}
Note that the STP is not related to the $\beta_{i,j}$, thus the original optimization problem can be transformed as:
\begin{align}
&\arg\max_{\beta_{i,j}}\Xi_{i,j,n}^{\rm h}\\
{\rm s.t.}\quad &0\leq \beta_{i,j} \leq 1.
\end{align}
To simplify the analysis in the large-scale F-RAN, we assume that the compression offloading ratio of each UE is the same \cite{ren2019data}. Due to the complexity of $\Xi_{i,j,h}$, it is difficult to obtain the closed-form optimal compression proportion $\beta_{i,j}^{\rm opt}$. Fortunately, as the SDCP has been obtained in a closed-form and the $\beta_{i,j}$ has a limited range, the effect of compression ratio $\beta_{i,j}^{\rm opt}$ on the SDCP can be analysed numerically, and the optimal compression ratio can be obtained by the bisection method.

\section{Simulation Results}
\label{sec_simu_res}
In this section, firstly, the derived theoretical SDCP is validated by the Monte Carlo simulations. Based on the derived SDCP, the effects of FN computing capability, the COR and different compression mode in terms of the SDCP are analysed numerically. 

\subsection{The SDCP Validation}

\begin{table}
	\centering
	\caption{SIMULATION PARAMETERS}\label{table_simu_val}
	\begin{tabular}{|c|c|c|c|}
		\hline
		Parameters&Values & Parameters & Values\\
		\hline
		$c$&$10^{-4}$ FNs/m$^2$ &$\alpha$ & $4$\\
		\hline
		$B$ & $5$ MHz & $M_j^{\rm U}$ & $4$\\
		\hline
		$M_{\rm A}$ & $1$ & $M_{\rm N}$ & $2$  \\
		\hline
		$\kappa_{i,j}$ & $2048$ bits & $\gamma_{i,j}$ & 0.6 \\
		\hline
		$\epsilon$ & $0.8$ & $\tau$ & 0 dB\\
		\hline
		$s^{\rm u,c}_{i,j}$, $s^{\rm n,c}_{j}$ & $1$ GHz, $5$ GHz & $s_n^{\rm {A,d}}$ & $24$ GHz\\
		\hline
		 	$g^{\rm u, c}_{i,j}$, $g^{\rm n}_{j}$, $g^{\rm d}_{i,j}$ & $1 \times 10^{5}$ cycles & $g^{\rm p}_{i,j}$ & $1.5 \times 10^{5}$ cycles \\
		\hline
	\end{tabular}
\end{table}
The Monte Carlo simulations are applied to validate the theoretical SDCP. The theoretical SDCP is obtained following \eqref{Coro 1}. Due to the independence of the STP and STEP, the SDCP is validated by two Monte Carlo simulations. For the STP, in each iteration of Monte Carlo simulations, we place the target FN at the origin and place other FNs following the PPP. Each FN exists an interfering UE located around its associated FN in a circle region uniformly with radius being $1/\sqrt{c}$. By calculating the SIR of target FAP in each iteration, the cumulative distribution function (CDF) of the SIR for the target FAP can be obtained with $10,000$ iterations. As a result, the simulation result of STP for the target FAP can be obtained. For the STEP, $10,000$ arrivals are modelled following the Poisson process and the service time of each arrival in each processor is modelled following independent exponential distribution, whose parameter is related to the  what kind of task being processed and the computing capability of the processor. For each arrival, the arriving time at the buffer in the UE and the leaving time at the last processor in the FAP are recorded. Thus, the CDF of the task respond time, i.e., STEP, in the system can be obtained. Consequently, the simulation results of the SDCP can be yielded. Note that the M/G/1 and M/M/1 queues needs to be stable, thus the simulation parameters should satisfy $\mu_{i,j}^{u,c}\geq \Lambda_{i,j}$, $\mu_{i,j}^{n,c}\geq M_j^{\rm U} \Lambda_{i,j}$, $\mu_n^{\rm dd} \geq  M_{\rm N} M_j^{\rm U} \Lambda_{i,j}$, and $\mu_n^{\rm cp} \geq  M_{\rm N} M_j^{\rm U} \Lambda_{i,j}$. To simplify the analysis, it is assumed that the task generation rate and compression offloading ratio in each UE are the same, respectively. Moreover, all UEs, FNs and FAPs respectively have the same computing capability \cite{ren2019data}. 
The value of simulation parameters are listed in Table \ref{table_simu_val} unless otherwise specified. 
\begin{figure}
	\centering
	\includegraphics[width = 0.56\linewidth]{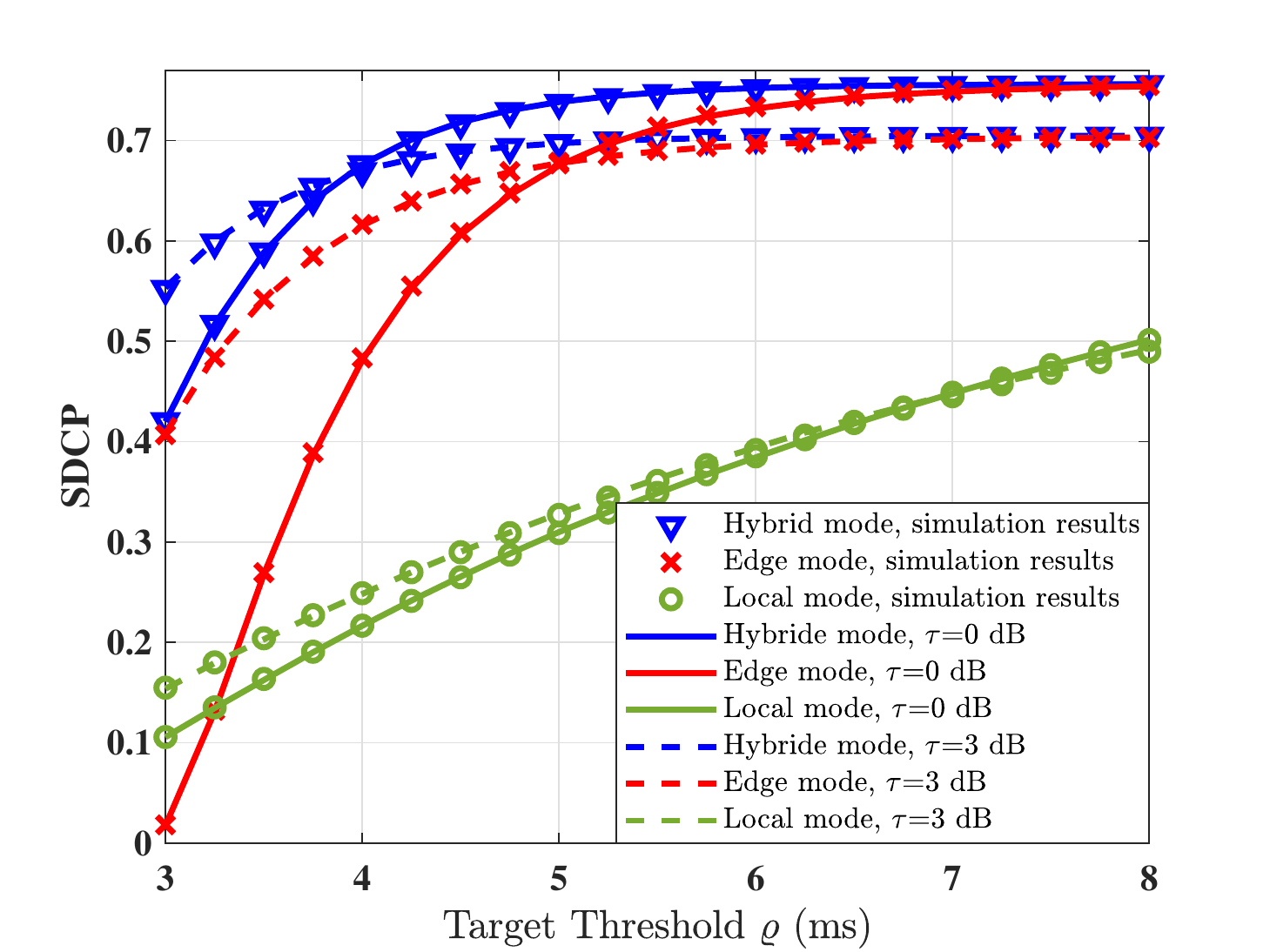}
	\caption{The SDCP validation for the local, edge and hybrid compression modes versus target latency with different SIR thresholds}
		\label{fig_validation_1}	
\end{figure}

In Fig. \ref{fig_validation_1}, the theoretical and simulation results of SDCP are both plotted versus the target latency with the SIR threshold $\tau$ being $0$ and $3$ under local, edge and hybrid compression modes. It is shown that the theoretical results match with the simulation results, which verifies the correctness of our derived results. In addition, the results indicate that in high latency requirement scenarios, the SDCP with a higher SIR threshold outperforms that with a lower SIR threshold and vice versa. This is mainly because that the uplink transmission delay has significant effect on the SDCP when the target latency is low, and the uplink transmission rate can be improved with a higher SIR threshold. In other words, for the F-RAN requiring a low target latency, the SIR threshold $\tau$ should be increased to improve the uplink transmission rate, which leads to the SDCP performance enhancement.

\begin{figure}
	\centering
	\includegraphics[width = 0.56\linewidth]{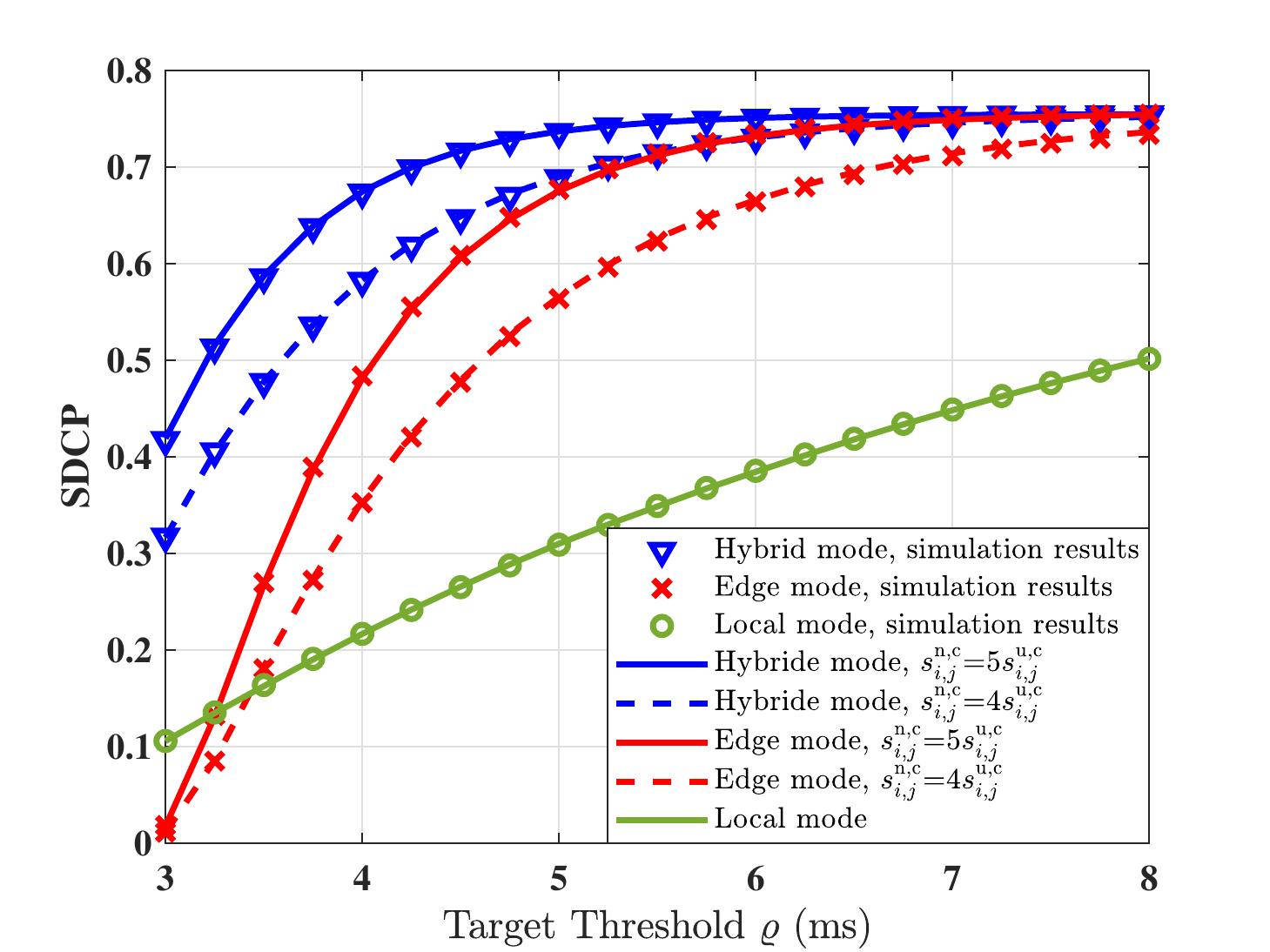}
	\caption{The SDCP validation for the local, edge and hybrid compression modes versus target latency with different computation capacities  }	
	\label{baisc_frequency_compare}
\end{figure}

In Fig. \ref{baisc_frequency_compare}, the theoretical and simulation results of SDCP versus the target latency with different FN computing capabilities. It is found that the computing capability of FN has greater effect on the edge compression mode as compared with the hybrid compression mode in terms of the SDCP. This is because that all the computation tasks are compressed at the FNs under the edge compression mode while only a partial of computation tasks are compressed at the FNs under the hybrid compression mode. Moreover, the results further verify the correctness of our derived theoretical SDCP, which enables us perform following analysis based on our derived theoretical results. 



\subsection{Numerical Analysis}
\begin{figure}
	\centering
	\includegraphics[width = 0.56\linewidth]{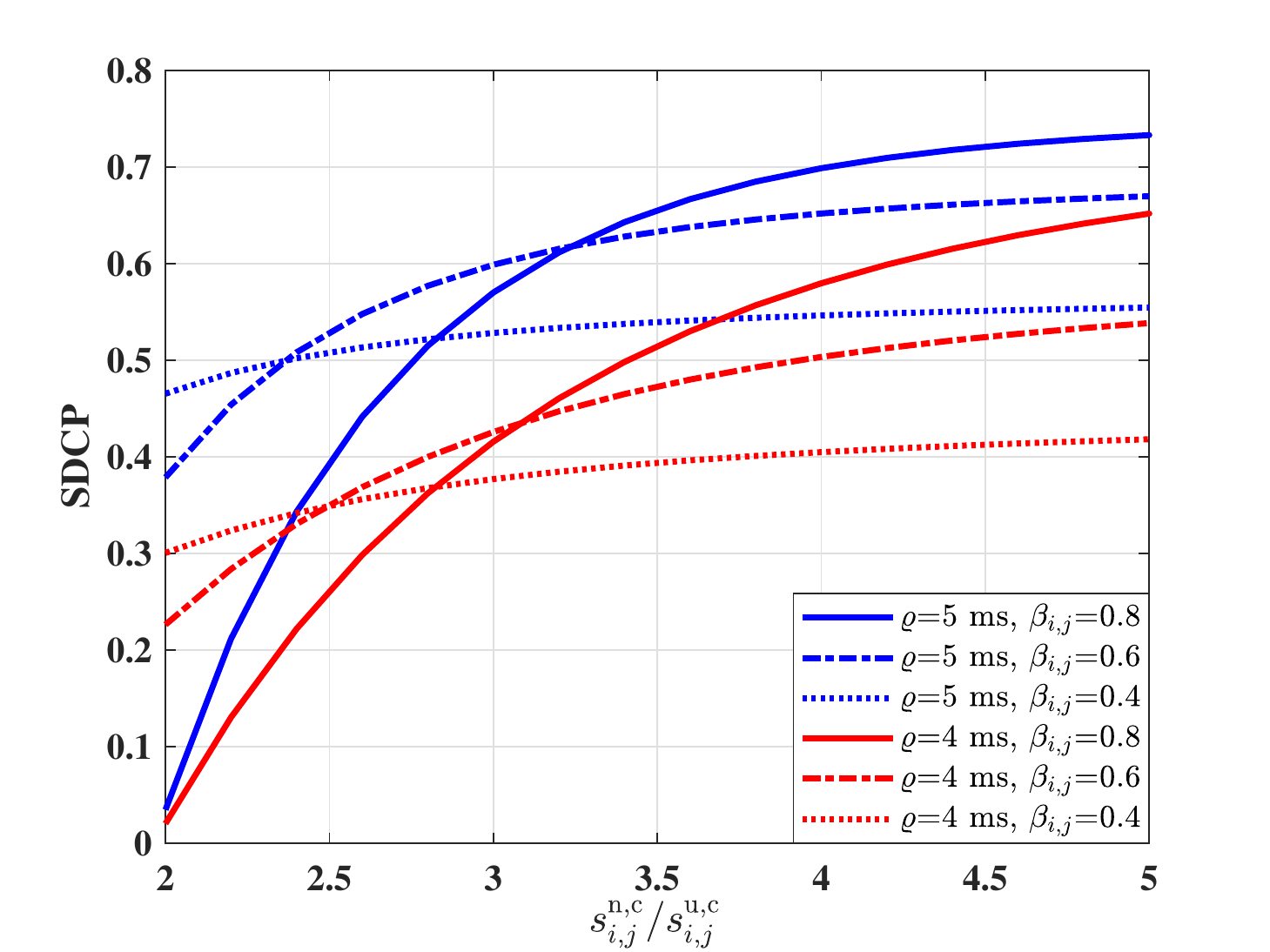}
	\caption{The effect of computing capability ratio between the FN and UE on the SDCP for the hybrid compression mode under compression ratio being $0.4$, $0.6$ and $0.8$ and target latency being $3$ and $5$ ms}	
	\label{prop_UE_FN}
\end{figure}
In Fig. \ref{prop_UE_FN}, the SDCPs are illustrated versus the computing capability ratio between the FN and UE, i.e., $\frac{s^{\rm n,c}_{i,j}}{s^{\rm u,c}_{i,j}}$, for the hybrid compression mode with compression ratio being $0.4$, $0.6$ and $0.8$ and target latency being $4$ and $5$ ms. It is shown that the SDCP monotonically increases with the computing capability enhancement in the FN. Moreover, the compression offloading ratio has a significant effect on the SDCP. Under the same simulation environment, the maximum gap between the SDCPs with different compression offloading ratios is $0.26$ when the target latency is $4$ ms. Furthermore, even with a strong computing capability in the FN, the SDCP enhancement is limited with a low compression offloading ratio. This is due to that the computing capability in the FN cannot be fully utilized when only a limited number of tasks being offloaded to the FN for compression. 
\begin{figure}
	\centering
	\includegraphics[width = 0.56\linewidth]{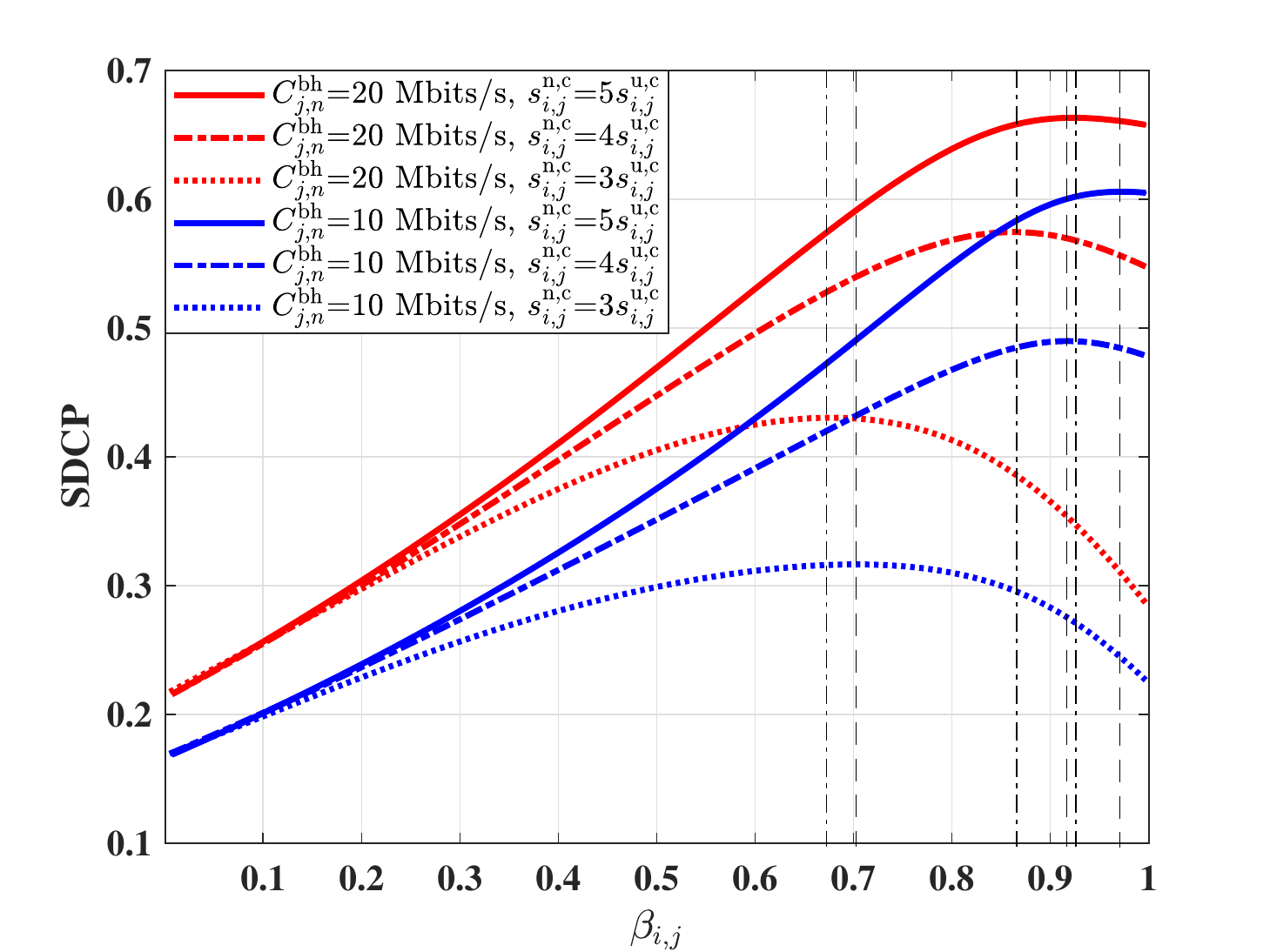}
	\caption{The SDCP versus compression offloading ratio under the hybrid compression mode with different backhaul capabilities and FN computing capabilities}
	\label{beta_analysis}
\end{figure}

In Fig. \ref{beta_analysis}, the SDCPs are plotted versus the compression offloading ratio with different backhaul capacities and FN computing capabilities under the hybrid compression mode. The results show that there exists an optimal compression offloading ratio to maximize the SDCP. In addition, with the same FN computing capability, the optimal compression offloading ratio decreases with the increasing of backhaul capacity. This is mainly because that the transmission latency in the backhaul is reduced by enhancing the backhaul capacity. As a result, more computation tasks can be compressed at the UE, which increases the processing latency in UE but reduce the uplink wireless transmission latency, thus increases the SDCP in the F-RAN. Moreover, as compared with the local and edge compression modes, i.e., $\beta_{i,j}=0$ and $\beta_{i,j}=1$, the SDCP can be enhanced with a maximum value of $0.43$ and $0.15$ under the hybrid compression mode, respectively. This indicates that the compression offloading ratio significantly affects the SDCP in the F-RAN. Next, the optimal compression offloading ratio to maximize the SDCP will be analysed. 


\subsection{The Optimal SDCP}

\begin{figure}
	\centering
	\includegraphics[width = 0.56\linewidth]{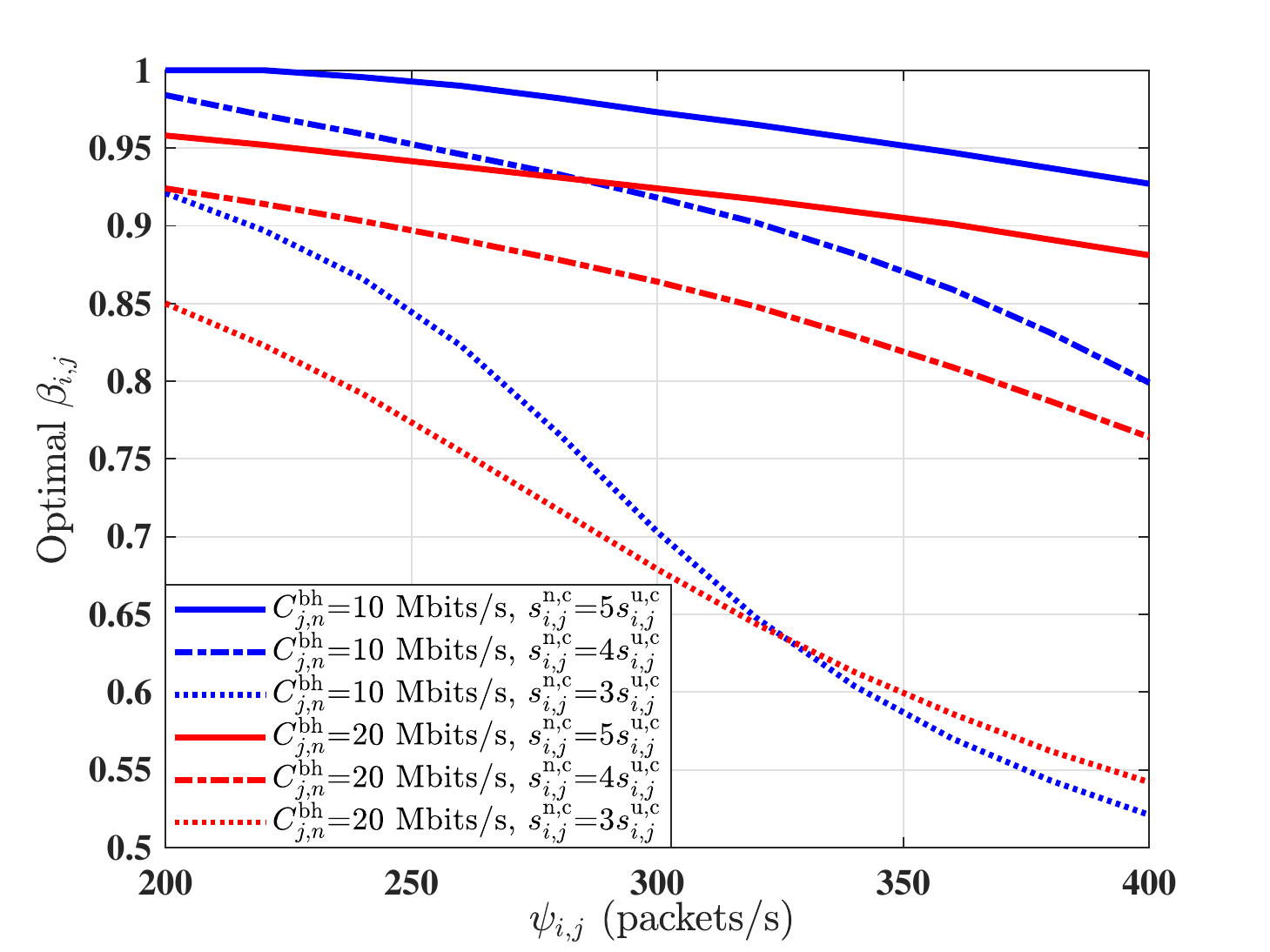}
	\caption{The relationship between the optimal $\beta_{i,j}$ and the task generation rate $\psi_{i,j}$ for the hybrid compression mode. The values of $M_j^{\rm U}$ are set to 3 and 5 for the red lines and blue lines, respectively. The SINR threshold $\tau$ is set to 0 dB, and the target latency is 4 ms. }	
	\label{fig_optimal_beta}
\end{figure}

In Fig. \ref{fig_optimal_beta}, the optimal compression offloading ratio is analysed versus the task generation rate with different backhaul capacities and FN computing capabilities. The results show that the optimal compression offloading ratio is monotonically decreased with the increasing of task generation rate. This is mainly because that the computing burden of FNs increases with the increasing of task generation rate, and offloading some computing tasks to be compressed at the UE can reduce the compression latency. Additionally, the optimal compression offloading ratio ranges from $0.5$ to $1$ under our simulation environment. In most cases, the optimal compression offloading ratio with low backhaul capability is higher than that with high backhaul capacity. Interestingly, the exception occurs in the scenarios with relatively low FN computing capability and task generation rate exceeding $325$ packets per second. This is because that under such a scenario, the main constraints on the SDCP becomes the DC compression latency in the UE and FN, thus by offloading more computation tasks compressed at the edge can increase the SDCP.

\begin{figure}
	\centering
	\label{uni}
	\subfigure[]{
		\includegraphics[width = 0.42\linewidth]{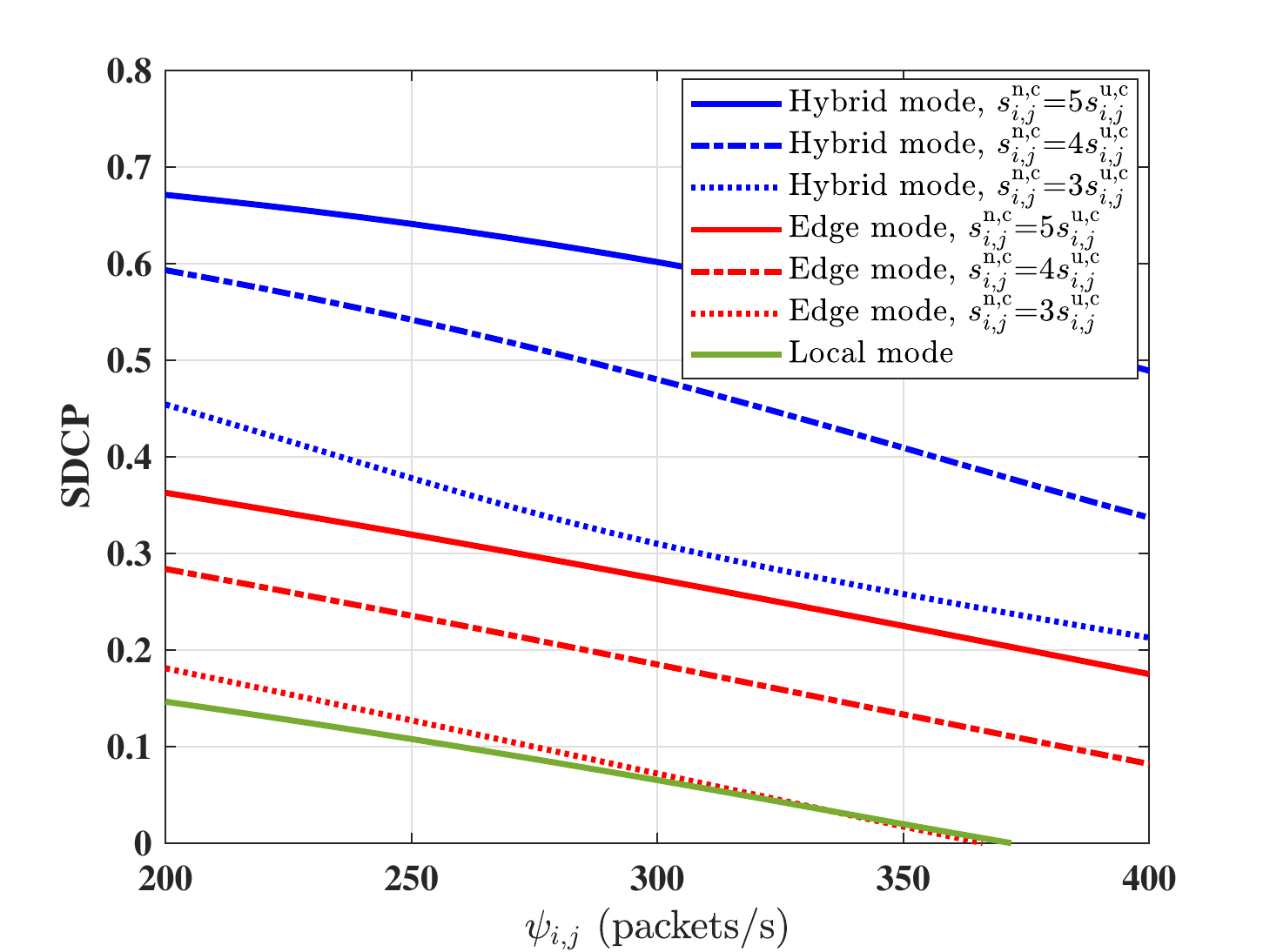}
	}	
	\subfigure[]{
		\includegraphics[width = 0.42\linewidth]{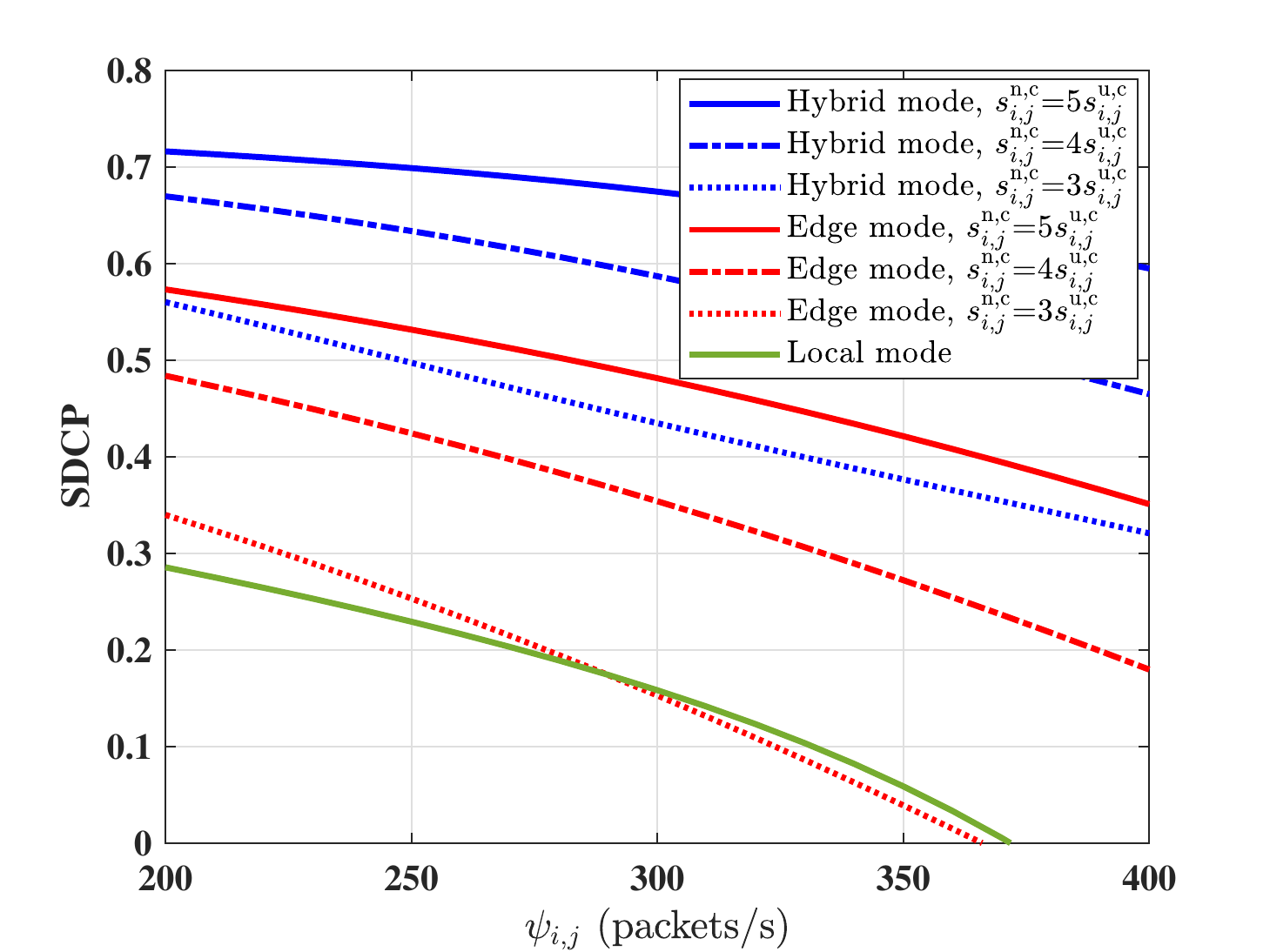}
	}
	\caption{The theoretical results of SDCP with task generation rate $\psi_{i,j}$ from $200$ to $400$ packets/s. The SINR threshold $\tau$ is set to 0 dB, and the target latency is 4 ms. The values of $C^{\rm bh}_{j,n}$ are set to $10$ Mbits/s and $20$ Mbits/s for (a) and (b), respectively.	} 
	\label{arrive_rate}
\end{figure}

Equipped with the optimal compression offloading ratio, in Fig. \ref{arrive_rate}, the SDCPs are illustrated versus the task generation rate under the hybrid compression mode with different FN computation capabilities under the backhaul capacity being $10$ and $20$ Mbits/s, respectively. The SDCP under the local and edge compression modes with corresponding simulation parameters are also plotted in these two figures.  
The results show that the hybrid compression mode outperforms the local and edge compression modes in terms of the SDCP. The SDCP can be improved by a maximum value of $0.55$ and $0.3$ in the hybrid compression mode as compared with the local and edge compression modes, respectively. It indicates that the hybrid compression mode, which can fully utilize the computing capability in UEs and FNs, should be adopted to enhance the SDCP of the F-RAN.
Additionally, on one hand, by increasing the backhaul capacity from $10$ to $20$ Mbits/s, the SDCPs obtained from the local and hybrid compression modes can be enhanced by a maximum value of $0.13$ and $0.22$, respectively. On the other hand, the SDCP obtained from the hybrid compression mode can be maximumly increased by $0.12$ by increasing the backhaul capacity from $10$ to $20$ Mbits/s. This indicates that by employed the hybrid compression mode, the requirement on the backhaul capacity can be reduced as compared with the local and hybrid compression mode.

\section{Conclusions}
\label{sec_conclusion}
In this work, we have obtained the closed-form results of the SDCP in the clustered DC-enabled F-RAN. Based on the results, the effect of FN computing capability, COR, and backhaul capacity on the SDCP are analysed numerically. Specifically, by given the feasible range of COR, the optimal COR for maximizing the SDCP can be obtained via the bisection method. It is observed that with the optimal COR, the SDCP under the hybrid compression mode can be improved with a maximum value of  $0.55$ and $0.3$ as compared with that under the local and edge compression mode, respectively, and the optimal COR lies in the range of $[0.55,1]$ under our simulation environment. Thus, the hybrid compression mode should be applied in the clustered F-RAN to enhance the SDCP. Additionally, the optimal COR decreases with the increasing of backhaul capacity. Furthermore, for the system requiring the minimal latency, the pressure of backhaul capacity can be alleviated by utilizing the hybrid compression mode. As the relationship between the compression ratio and the required CPU cycles for compressing one bit data is still unknown, this relationship and the effect of compression ratio on the latency performance can be analysed in our future work. Moreover, the energy efficiency and computing efficiency of the DC-enabled F-RAN are another two interesting topics to be analysed in the future.

\appendices

\section{Proof of Lemma \ref{lemma_stp}}
\label{app_stp}
Recall that we can place the target FN at the origin according to the Slyvnyak's theorem \cite{andrews2011tractable}. Additionally, the UEs are scattered uniformly in a circle region around their associated FNs and the group of UEs associated with the same FN are allocated with orthogonal frequency resources. Consequently, the STP $\Delta_{i,j}$ of an arbitrary UE can be obtained by calculating the following equation as:
\begin{equation}
\Delta_{i,j}=\mathbb{P}({\rm SIR}>\tau).
\end{equation}
Combining with \eqref{eq_sir_def}, the STP $\Delta_{i,j}$ can be transformed as:
\begin{equation}
\label{equ_app_a_1}
\begin{aligned}
\Delta_{i,j}&=\mathbb{P}\left[\frac{h_0Y_0^{-\alpha(1-\epsilon)}}{\sum_{k=1}^{\infty}h_kR_k^{\alpha\epsilon} Y_k^{-\alpha}}>\tau \right]\\
&=\mathbb{E}\left[\exp\left(-\frac{\mu\tau}{Y_0^{-\alpha(1-\epsilon)}}\sum_{k=1}^{\infty}h_kR_k^{\alpha\epsilon}Y_k^{-\alpha}\right) \right].\\
\end{aligned}
\end{equation}
Note that the square of distance $|Y_k|^2$ from the interfering UE to the target FN can be expressed as $|X_k^{\rm N}|^2+R_k^2-2|X_k^{\rm N}|R_j\cos(\gamma_k)$, where $\gamma_k$ is the angle between the link of interfering UE and its associated FN and the link of this associated FN and the target FN. Due to the FN positions are deployed following the PPP and the interfering UE distributed uniformly in a circle region around its associated FAP, 
the angle $\gamma_k$ follows a uniform distribution ranging from $0$ to $2\pi$, i.e., $\gamma_k\sim U(0,2\pi)$. In addition, as the UE distributes uniformly in a circle region around its associated FAP with radius $1/\sqrt{c}$, the square of distance between the UE and its associated FAP follows a uniform distribution in the range of $(0,1/c)$, i.e, $|Y_0|^2\sim U(0,1/c)$ and $|R_k|^2\sim U(0,1/c)$. Therefore, the result in \eqref{equ_app_a_1} can be derived as: 
\begin{equation}
\label{eq_appd_a_2}
\begin{aligned}
&\Delta_{i,j}=c\int_0^{\frac{1}{c}}\prod_{k=1}^{\infty}\left\{\mathbb{E}\left[\exp\left(-\frac{\mu\tau h_kR_k^{\alpha\epsilon}Y_k^{-\alpha}}{v^{-\frac{\alpha}{2}(1-\epsilon)}}\right) \right] \right\}{\rm d}v\\
&\overset{(a)}=\!c\int_0^{\frac{1}{c}}e^{\!-\!\pi\lambda_{\rm N}\int_0^{\infty}1\!-\!\mathbb{E}\left[ e^{-\mu\tau h_k\mathcal{D}(R_k,v,x,\gamma_k)^{\frac{\alpha}{2}}
 } \right]{\rm d}x }{\rm d}v\\
&= \int_0^{1}e^{\!-\!\pi\lambda_{\rm N}\int_0^{\infty}1\!-\!\mathbb{E}\left[ e^{-\mu\tau h_k\mathcal{D}(R_k,cv,x,\gamma_k)^{\frac{\alpha}{2}}
 } \right]{\rm d}x }{\rm d}v,
\end{aligned}
\end{equation}
where step (a) is obtained following the probability generating function (PGF) of the PPP \cite{andrews2011tractable}, and $\mathcal{D}(R_k,cv,x,\gamma_k)=\frac{R_k^{2\epsilon}v^{1-\epsilon}c^{1-\epsilon}}{x+R_k^2-2\sqrt{x}R_k\cos{\gamma_k} }$. By averaging on the independent random variables $h_k$, $R_k$ and $\gamma_k$, the expectation result in \eqref{eq_appd_a_2} can be calculated by:
\begin{equation}
\label{eq_appd_a_3}
\begin{aligned}
&\mathbb{E}\left[ e^{-\mu\tau h_n\mathcal{D}(R_n,v,x,\gamma_n)^{\frac{\alpha}{2}}
 } \right]\\
&=\frac{c}{2\pi}\!\int_0^{\frac{1}{c}}\!\int_0^{2\pi}\!\left[{1\!+\!\tau\left(\frac{c^{1-\epsilon}v^{1-\epsilon}u^{\epsilon}}{x\!+\!u\!-\!2\sqrt{xu}\cos{y}} \right)^{\frac{\alpha}{2}}}\right]^{\!-\!1}{\rm d}y{\rm d}u\\
&=\frac{1}{2\pi}\int_{0}^{1}\int_{0}^{2\pi}\frac{1}{1+ \tau(\frac{u^{\epsilon}v^{1-\epsilon}}{cx+cu-2\sqrt{cxu}\cos y})^{\frac{\alpha}{2}}}\text{d}y\text{d}u\\
&\triangleq\mathcal{H}(v,x).
\end{aligned}
\end{equation}  
By incorporating the result in \eqref{eq_appd_a_3} into \eqref{eq_appd_a_2}, and some symbolic transformations, the result in Lemma \ref{lemma_stp} can be achieved.


\section{Proof of Proposition \ref{prop_stp_appx}}
\label{appd_prop_stp_appx}
By treating the distances between the target FN and the neighbouring FNs as the distance between the target FN and the interfering UEs that used in the aggregate interference power, the approximated STP $\widetilde{\Delta}_{i,j}$ can be calculated by substituting $Y_k$ in \eqref{eq_appd_a_2} with $X_k^{\rm N}$ as:
\begin{equation}
\begin{aligned}
&\widetilde{\Delta}_{i,j}=c\int_0^{\frac{1}{c}}\!\prod_{k=1}^{\infty}\left\{\mathbb{E}\left[\exp\left(-\frac{\mu\tau h_kR_k^{\alpha\epsilon}|X_k^{\rm N}|^{-\alpha}}{v^{-\frac{\alpha}{2}(1-\epsilon)}}\right) \right] \right\}{\rm d}v\\
&=c\int_0^{\frac{1}{c}}\exp\left(-c\pi\lambda_{\rm N}\!\int_0^{\infty}\!\int_0^{\frac{1}{c}\!}{\frac{1}{1+\tau^{-1}\left(\frac{x}{v^{1-\epsilon}u^{\epsilon}} \right)^{\frac{\alpha}{2}}}}{\rm d}u{\rm d}x \right){\rm d}v\\
&=\int_0^{1}\exp\left(-\pi\lambda_{\rm N}\!\int_0^{1}\!\int_0^{\infty}\!{\frac{1}{1+\tau^{-1}\left(\frac{cx}{v^{1-\epsilon}u^{\epsilon}} \right)^{\frac{\alpha}{2}}}}{\rm d}x{\rm d}u \right){\rm d}v\\
&=\int_0^{1}\exp\left(-\frac{2\pi^2\lambda_{\rm N}\tau^{\frac{2}{\alpha}} v^{1-\epsilon}}{c\alpha\sin(\frac{2\pi}{\alpha})} \int_0^{1}u^{\epsilon}{\rm d}u \right){\rm d}v\\
&=\int_0^{1}\exp\left(-\zeta v^{1-\epsilon}\right){\rm d}v.
\end{aligned}
\end{equation}
By calculating this result in Wolfram Mathematica, the approximated result of $\widetilde{\Delta}_{i,j}$ can be obtained. 

\section{Proof of Lemma \ref{lemma_lt_dd_tc_time}}
\label{appd_lemma_lt_dd_tc_time}
Since the whole queueing model in the F-RAN meets the definition of Jackson network \cite{mor2013performance}, the aggregate arriving rate of the computation task at the FAP follows the Poisson process. The aggregate arriving rate of the computation tasks at the FAP $X_n^{\rm A}$ is denoted by $\Lambda_n^{\rm A}$. Thus, the whole processing in the FAP can be treated as the M/G/1 queueing model. According to the Pollaczek-Khinchin transform equation for the M/G/1 queueing model \cite{mor2013performance}, the Laplace transform of the entire delay on the DD and task computing can be represented as:
\begin{equation}
\label{eq_dd_cp_lap_def}
\mathcal{L}_{T_{i,j,n}^{\rm dp}}(s)=\frac{\mathcal{L}_{T_{i,j,n}^{\rm dp,se}}(s)(1-\rho)s}{\Lambda_n^{\rm A}\mathcal{L}_{T_{i,j,n}^{\rm dp,se}}(s)\Lambda_n^{\rm A}+s},
\end{equation}  
where $\mathcal{L}_{T_{i,j,n}^{\rm dp,se}}(s)$ is defined as the Laplace transform of processing delay in the FAP, $\rho$ equals $\Lambda_n^{\rm A}\mathbb{E}[T_{i,j,n}^{\rm dp,se}]$. Note that the PDF of the service rate in the FAP is given in \eqref{equ_pdf_serv_time_fap}. Consequently, the $\mathcal{L}_{T_{i,j,n}^{\rm dp,se}}(s)$ can be calculated as follows:
\begin{equation}
\label{appd_lemma_lt_dd_tc_time_1}
\begin{aligned}
\mathcal{L}_{T_{i,j,n}^{\rm dp,se}}(s)&=\int_0^{\infty}{\exp(-sx)f_{\mu_n^{\rm A}}(x)}{\rm d}x\\
&=\frac{\mu_n^{\rm dd}\mu_n^{\rm cp}}{(s+\mu_n^{\rm dd})(s+\mu_n^{\rm cp})}.
\end{aligned}
\end{equation}
Moreover, the term $\rho$ can be expressed as:
\begin{equation}
\label{appd_lemma_lt_dd_tc_time_2}
\rho=\Lambda_n^{\rm A}\left(\frac{1}{\mu_n^{\rm dd}}+\frac{1}{\mu_n^{\rm cp}} \right).
\end{equation}
By incorporating \eqref{appd_lemma_lt_dd_tc_time_1} and \eqref{appd_lemma_lt_dd_tc_time_2} into \eqref{eq_dd_cp_lap_def}, we obtain the Laplace transform of the entire delay on the DD and task computing in the FAP $X_n^{\rm A}$ as follows:
\begin{equation}
\mathcal{L}_{T_{i,j,n}^{\rm dp}}(s)=\frac{s[\mu_n^{\rm dd}\mu_n^{\rm cp}-\Lambda_n^{\rm A}(\mu_n^{\rm dd}+\mu_n^{\rm cp})]}{(s-\Lambda_n^{\rm A})(s+\mu_n^{\rm dd})(s+\mu_n^{\rm cp})\!+\!\Lambda_n^{\rm A}\mu_n^{\rm dd}\mu_n^{\rm cp}}.
\end{equation} 
Then by calculating its inverse Laplace transform in Matlab, i.e., $\mathcal{L}^{-1}_{T_{i,j,n}^{\rm dp,se}}(t)=\mathcal{L}^{-1}\left[\mathcal{L}_{T_{i,j,n}^{\rm dp,se}}(s)\right]$, the result of $f_{T_{i,j,n}^{\rm dp}}(t)$ in Lemma \ref{lemma_lt_dd_tc_time} can be yielded. It is worth mentioning that we can prove that the integration $\int_0^{\infty}f_{T_{i,j,n}^{\rm dp}}(t)$ equals $1$, which satisfies the fundamental property of the PDF.

\section{Proof of Theorem \ref{theo_step}}
\label{appd_theo_sdcp}
The STEP $\Xi_{i,j,n}$ of the UE $Y_{i,j}$ associated with the FN $X_j^{\rm N}$ and the FAP $X_n^{\rm A}$ can be expressed as follows according to its definition:
\begin{equation}
\begin{aligned}
\Xi_{i,j,n}&=\mathbb{P}\left[T_{i,j,n}^{\rm tot,\xi}<\varrho\right]\\
&=\mathbb{P}\left[T_{i,j}^{\rm \xi,ut}+T_{i,j}^{\rm \xi,dc}+T^{\rm bh}_{i,j}+T^{\rm dp}_{i,j,n}<\varrho\right].
\end{aligned}
\end{equation}
For analytical tractability, we adopt the average ergodic rate to calculate the uplink transmission latency \cite{hu2019density}, which can be calculated by:
\begin{equation}
\label{uplink_rate}
\bar{C}_{i,j}^{\rm ut}=\int_0^{\infty}\log_2(1+\tau)\widetilde{\Delta}_{i,j}(\tau){\rm d}\tau.
\end{equation}
Combining with the result in Proposition \ref{prop_stp_appx},  the average uplink transmission latency $\bar{T}_{i,j}^{\rm \xi,ut}$ can be calculated as given in \eqref{equ_avg_up_trans}. Therefore, the STEP $\Xi_{i,j,n}$ can be transformed as:
\begin{equation}
\begin{aligned}
\Xi_{i,j,n}&=\mathbb{P}\left[T_{i,j}^{\rm \xi,dc}+T^{\rm dp}_{i,j,n}<\varrho-\bar{T}_{i,j}^{\rm \xi,ut}-T^{\rm dp}_{i,j,n}\right]\\
&=\mathbb{P}\left[\beta_{i,j}^{\xi}T_{i,j}^{\rm e,dc}+(1-\beta_{i,j}^{\xi})T_{i,j}^{\rm l,dc}+T^{\rm dp}_{i,j,n}<\varrho'_{\xi} \right],\\
\end{aligned}
\end{equation}
where $\varrho'_{\xi}$ denotes $\varrho-\bar{T}_{i,j}^{\rm \xi,ut}-T^{\rm dp}_{i,j,n}$ for denotational simplicity. Moreover, $\beta_{i,j}^{\xi}$ equals $1$ when $\xi={\rm e}$, equals $0$ when $\xi={\rm l}$, and equals $\beta_{i,j}$ otherwise. Note that both of the $T_{i,j}^{\rm e,dc}$ and $T_{i,j}^{\rm l,dc}$ are modelled following the M/M/1 model, thus $T_{i,j}^{\rm e,dc}$ and $T_{i,j}^{\rm l,dc}$ both follow the exponential distribution with parameters being $\sigma_{j}^{\rm N}=\mu_{j}^{\rm n,c}-\Lambda_{j}^{\rm N}$ and $\sigma_{i,j}^{\rm U}=\mu_{i,j}^{\rm u,c}-\Lambda_{i,j}$, respectively \cite{}, where $\Lambda_{j}^{\rm N}$ is defined as the task arriving rate at the FN $X_j^{\rm N}=M_j^{\rm U}\Lambda_{i,j}$. Accordingly, the STEP can be calculated as follows when $\beta_{i,j}^{\xi}\neq 0$:
\begin{equation}
\label{eq_appd_step_1}
\begin{aligned}
&\Xi_{i,j,n}=\mathbb{P}\left\{T_{i,j}^{\rm e,dc}<\frac{1}{\beta_{i,j}^{\xi}}\left[\varrho'_{\xi}-(1-\beta_{i,j}^{\xi})T_{i,j}^{\rm l,dc}-T^{\rm dp}_{i,j,n}\right] \right\}\\
&\!=\!\mathbb{E}_{T_{i,j}^{\rm l,dc},T^{\rm dp}_{i,j,n}}\left[
1-e^{-\sigma_{j}^{\rm N}\left(
	\frac{\varrho'_{\xi}}{\beta_{i,j}^{\xi}}-\frac{1-\beta_{i,j}^{\xi}}{\beta_{i,j}^{\xi}}T_{i,j}^{\rm l,dc}-\frac{T^{\rm dp}_{i,j,n}}{\beta_{i,j}^{\xi}}
	\right)
	}\right]\\
	&\!=\!\iint\limits_{\mathbb{L}}
	\left[
	1\!-\!e^{\!-\sigma_{j}^{\rm N}\left(
	\frac{\varrho'_{\xi}}{\beta_{i,j}^{\xi}}-\frac{1\!-\!\beta_{i,j}^{\xi}}{\beta_{i,j}^{\xi}}x-\frac{t}{\beta_{i,j}^{\xi}}
	\right)
	}\right]
	f_{T_{i,j}^{\rm l,dc}}(x)f_{T^{\rm dp}_{i,j,n}}(t)
	{\rm d}x{\rm d}t \\
	\end{aligned}
\end{equation}\\
where $\mathbb{L}$ is a two-dimensional integration region following $\mathbb{L}=\left\{(x,t):\frac{1-\beta_{i,j}^{\xi}}{\beta_{i,j}^{\xi}}x+\frac{1}{\beta_{i,j}^{\xi}}t\leq\frac{\varrho'_{\xi}}{\beta_{i,j}^{\xi}}\right\}$. This two-dimensional integration can be further translated as:
\begin{equation}
\label{equ_appd_step_2}
\begin{aligned}
&	\Xi_{i,j,n}\!=\!\int^{\varrho'_{\xi}}_{0}
	\int^{\frac{\varrho'_{\xi}\!-\!t}{1\!-\!\beta_{i,j}^{\xi}}}_{0}
	\left[
	1\!-\!e^{-\sigma_{j}^{\rm N}\left(
	\frac{\varrho'}{\beta_{i,j}^{\xi}}-\frac{1-\beta_{i,j}^{\xi}}{\beta_{i,j}^{\xi}}x-\frac{t}{\beta_{i,j}^{\xi}}
	\right)
	}\right]
	f_{T_{i,j}^{\rm l,dc}}(x)f_{T^{\rm dp}_{i,j,n}}(t)
	{\rm d}x{\rm d}t\\
	&\!=\! \int^{\varrho'_{\xi}}_{0}
	\int^{\frac{\varrho'_{\xi}-t}{1-\beta_{i,j}^{\xi}}}_{0}
	f_{T_{i,j}^{\rm l,dc}}(x)f_{T^{\rm dp}_{i,j,n}}(t)
	{\rm d}x{\rm d}t-
	\int^{\varrho'_{\xi}}_{0}
	\int^{\frac{\varrho'_{\xi}\!-\!t}{1\!-\!\beta_{i,j}^{\xi}}}_{0}
	e^{-\!\sigma_{j}^{\rm N}\left(
	\frac{\varrho'_{\xi}}{\beta_{i,j}^{\xi}}\!-\!\frac{(1-\beta_{i,j}^{\xi})}{\beta_{i,j}^{\xi}}x\!-\!\frac{t}{\beta_{i,j}^{\xi}}
	\right)
	}
	f_{T_{i,j}^{\rm l,dc}}(x)f_{T^{\rm dp}_{i,j,n}}(t)
	{\rm d}x{\rm d}t.
\end{aligned}
\end{equation}
The first term in \eqref{equ_appd_step_2} can be calculated as:
\begin{equation}
\label{eq_appd_step_3}
\begin{aligned}
	   &\int_{0}^{\varrho'_{\xi}} \int_{0}^{\frac{\varrho'_{\xi}-t}{1-\beta_{i,j}^{\xi}}}
       f_{T_{i,j}^{\rm l,dc}}(x)f_{T^{\rm dp}_{i,j,n}}(t)
       {\rm d}x{\rm d}t =-\int^{\varrho'_{\xi}}_{0}\left[e^{-\sigma_{i,j}^{\rm U}\frac{\varrho'_{\xi}-t}{1-\beta_{i,j}^{\xi}}}-1\right]
           f_{T^{\rm dp}_{i,j,n}}(t){\rm d}t\\
       &\overset{(a)}=\frac{\mu_n^{\rm dd}\mu_n^{\rm cp} \left(1-\rho\right)  }{\eta_n}
         \left\{e^{-\frac{\sigma_{i,j}^{\rm U} \varrho'_{\xi}}{1-\beta_{i,j}^{\xi}}}
         \left[\frac{2(1-\beta_{i,j}^{\xi})}{2\sigma_{i,j}^{\rm U}+(1-\beta_{i,j}^{\xi})\varsigma_n^{\rm A}}
         \left(e^{\frac{2\sigma_{i,j}^{\rm U}+(1-\beta_{i,j}^{\xi}){\varsigma_n^{\rm A}}}{2(1-\beta_{i,j}^{\xi})}\varrho'_{\xi}}-1\right)   
         -\right.\right.\\
         &\left.\left.\frac{2(1-\beta_{i,j}^{\xi})}{2\sigma_{i,j}^{\rm U}+(1-\beta_{i,j}^{\xi}){\varsigma_n^{\rm A'}}}\left(e^{\frac{2\sigma_{i,j}^{\rm U}+(1-\beta_{i,j}^{\xi}){\varsigma_n^{\rm A'}}}{2(1-\beta_{i,j}^{\xi})}\varrho'_{\xi}}\!-\!1\right)                  
         \right]
        -\left(
         \frac{2}{\varsigma_n^{\rm A}}e^{\frac{\varsigma_n^{\rm A}}{2} \varrho'_{\xi}}
         -\frac{2}{\varsigma_n^{\rm A'}}e^{\frac{\varsigma_n^{\rm A'}} {2} \varrho'_{\xi}}
         -\frac{2}{\varsigma_n^{\rm A}}+\frac{2}{\varsigma_n^{\rm A'}}
         \right)
         \right\}\\
\end{aligned}
\end{equation}
where step (a) is obtained by incorporating the results of $f_{T^{\rm dp}_{i,j,n}}(t)$ given in \eqref{eq_dp_delay_distr} into \eqref{eq_appd_step_3}.
The second term in \eqref{equ_appd_step_2} can be calculated as follows:
\begin{equation}
	\label{eq_appd_step_4}
	\begin{aligned}
	&\int^{\varrho'_{\xi}}_{0}
	\int^{\frac{\varrho'_{\xi}\!-\!t}{1\!-\!\beta_{i,j}^{\xi}}}_{0}
	e^{-\!\sigma_{j}^{\rm N}\left(
	\frac{\varrho'_{\xi}}{\beta_{i,j}^{\xi}}\!-\!\frac{(1-\beta_{i,j}^{\xi})}{\beta_{i,j}^{\xi}}x\!-\!\frac{t}{\beta_{i,j}^{\xi}}
	\right)
	}
	f_{T_{i,j}^{\rm l,dc}}(x)f_{T^{\rm dp}_{i,j,n}}(t)
	{\rm d}x{\rm d}t\\
	&=\int^{\varrho'_{\xi}}_{0}
	  \frac{\beta_{i,j}^{\xi} \sigma_{i,j}^{\rm U}}{\left(\beta_{i,j}^{\xi}-1\right) \sigma_{j}^{\rm N}- \beta_{i,j}^{\xi} \sigma_{i,j}^{\rm U}}
	  \left[ e^{\left(\varrho'_{\xi}-t\right)\left(\frac{\sigma_{j}^{\rm N}}{\beta_{i,j}^{\xi}}-\frac{\sigma_{i,j}^{\rm U}}{1-\beta_{i,j}^{\xi}}\right)}  -1\right]\
	  e^{-\frac{\varrho'_{\xi} \sigma_{j}^{\rm N}}{\beta_{i,j}^{\xi}}}
	  e^{\frac{\sigma_{j}^{\rm N}}{\beta_{i,j}^{\xi}} t}
	  f_{T^{\rm dp}_{i,j,n}}(t)
	  {\rm d}t\\
	 &=-\frac{\beta_{i,j}^{\xi} \sigma_{i,j}^{\rm U} \mu_n^{\rm dd}\mu_n^{\rm cp} \left(1-\rho\right)}
	   {\eta_n \left[\left(1-\beta_{i,j}^{\xi}\right)\sigma_{j}^{\rm N}-\beta_{i,j}^{\xi} \sigma_{i,j}^{\rm U}\right]}
	   \left\{
	         e^{-\frac{\sigma_{i,j}^{\rm U} \varrho'_{\xi}}{1-\beta_{i,j}^{\xi}}}
	         \frac{2\left(1-\beta_{i,j}^{\xi}\right)}
	         {2 \sigma_{i,j}^{\rm U} + \left(1-\beta_{i,j}^{\xi}\right) \varsigma_n^{\rm A}}
	         \left[e^{\frac{2 \sigma_{i,j}^{\rm U} + \left(1-\beta_{i,j}^{\xi}\right) \varsigma_n^{\rm A'}}{2\left(1-\beta_{i,j}^{\xi}\right)} \varrho'_{\xi}}-1  \right] \right.\\	         
	&\left. -e^{-\frac{\sigma_{j}^{\rm N} \varrho'_{\xi}}{\beta_{i,j}^{\xi}}}
	         \frac{2\beta_{i,j}^{\xi}}
	         {2 \sigma_{j}^{\rm N} + \beta_{i,j}^{\xi} \varsigma_n^{\rm A}}
	         \left[e^{\frac{2 \sigma_{j}^{\rm N} + \beta_{i,j}^{\xi} \varsigma_n^{\rm A}}{2\beta_{i,j}^{\xi}} \varrho'_{\xi}}\!-\!1  \right] 
	        -e^{-\frac{\sigma_{i,j}^{\rm U} \varrho'_{\xi}}{1-\beta_{i,j}^{\xi}}}
	         \frac{2\left(1-\beta_{i,j}^{\xi}\right)}
	         {2 \sigma_{i,j}^{\rm U}\! +\! \left(1\!-\!\beta_{i,j}^{\xi}\right) {\varsigma_n^{\rm A'}}}
	         \left[e^{\frac{2 \sigma_{i,j}^{\rm U} \!+\! \left(1-\beta_{i,j}^{\xi}\right) {\varsigma_n^{\rm A'}}}{2\left(1-\beta_{i,j}^{\xi}\right)} \varrho'_{\xi}}\!-\!1  \right] \right.\\ 
	&\left. +e^{-\frac{\sigma_{j}^{\rm N} \varrho'_{\xi}}{\beta_{i,j}^{\xi}}} 	
	         \frac{2\beta_{i,j}^{\xi}}
	         {2 \sigma_{j}^{\rm N} + \beta_{i,j}^{\xi} {\varsigma_n^{\rm A'}}}
	         \left[e^{\frac{2 \sigma_{j}^{\rm N} + \beta_{i,j}^{\xi} {\varsigma_n^{\rm A'}}}{2\beta_{i,j}^{\xi}} \varrho'_{\xi}}-1  \right] 
	   \right\} .
	\end{aligned}
\end{equation}
By incorporating \eqref{eq_appd_step_3} and \eqref{eq_appd_step_4} into \eqref{equ_appd_step_2}, the STEP $\Xi_{i,j,n}$ for the DC mode $\xi$, $\xi\in\{\rm{h,e}\}$ can be calculated, and by some symbolic transformations, the result in \eqref{equ_clo_step} is obtained.

For the local compression mode, i.e., $\xi={\rm l}$ and $\beta_{i,j}=0$, the STEP $\Xi_{i,j,n}$ can be calculated as follows:
\begin{equation}
\begin{aligned}
\Xi_{i,j,n}&\!=\!\mathbb{P}\left[T_{i,j}^{\rm l,dc}+T^{\rm dp}_{i,j,n}<\varrho-\bar{T}_{i,j}^{\rm l,ut}-T^{\rm bh}_{i,j,n}\right]\\
&\!=\!\mathbb{P}\left[T_{i,j}^{\rm l,dc}<\varrho'_{\rm l}-T^{\rm dp}_{i,j,n} \right],\\
&\!=\!\mathbb{E}_{T^{\rm dp}_{i,j,n}}\left\{1-\exp\left[-\sigma_{i,j}^{\rm U}
\left(   \varrho'_{\rm l}-t  \right)
\right] \right\}\\
&\!=\!\int_{0}^{\varrho'_{\rm l}} \left\{ 1-\exp\left[-\sigma_{i,j}^{\rm U}
\left(   \varrho'_{\rm l}-t  \right)
\right]\right\}
f_{T^{\rm dp}_{i,j,n}}(t){\rm d}t\\
&=
    \frac{\mu_n^{\rm dd}\mu_n^{\rm cp} \left(1-\rho\right)  }{\eta_n}
    \left[
    \mathcal{A}\left(\varsigma_n^{\rm A},1,{\sigma_{i,j}^{\rm U}}\right)
    \!-\!\mathcal{A}\left({\varsigma_n^{\rm A'}},1,{\sigma_{i,j}^{\rm U}}\right)-\left(\frac{2}{\varsigma_n^{\rm A}}e^{\frac{\varsigma_n^{\rm A}}{2}\varrho'_{\rm l}}
    \!-\!\frac{2}{\varsigma_n^{\rm A'}}e^{\frac{\varsigma_n^{\rm A'}}{2}\varrho'_{\rm l}}
    \!-\!\frac{2}{\varsigma_n^{\rm A}}
    \!+\!\frac{2}{\varsigma_n^{\rm A'}}\right)
    \right].
\end{aligned}
\end{equation}
Note that the result for the compression mode is the same as the results for the other two compression modes as given in \eqref{equ_clo_step} with $\beta_{i,j}=0$. Consequently, the STEP $\Xi_{i,j,n}$ can be generally expressed as the result given in Theorem \ref{theo_step}.

\bibliographystyle{IEEEtran}
\bibliography{On_the_Performance_of_Data_Compression_in_Clustered_Fog_Radio_Access_Networks}

\end{document}